\documentclass[a4paper,UKenglish,cleveref,autoref,thm-restate]{lipics-v2021}
\usepackage{float}
\usepackage{xurl}
\usepackage[colorinlistoftodos,disable]{todonotes}
\usepackage[commandnameprefix=ifneeded,authormarkup=superscript]{changes}
\definechangesauthor[name=duc, color=blue]{duc}
\usepackage{xcolor}
\usepackage{xspace}
\usepackage{tcolorbox}
\tcbuselibrary{breakable}
\usetikzlibrary{arrows.meta,positioning,calc}
\setuptodonotes{inline}

\title{Concave is the New Linear: The Impossibility of
Anti-Plutocratic DAO Governance}
\titlerunning{Concave is the New Linear}

\author{Austin Bennett}{Circle Research, USA}{gaustinbennett@gmail.com}{0009-0005-3186-9774}{} %TODO update ORCID/funding
\author{Preston Vander Vos}{Circle Research, USA}{preston.vandervos@circle.com}{0009-0004-7206-5839}{}
\author{Duc V. Le}{Circle Research, USA}{levduc112@gmail.com}{0000-0002-8123-2713}{} %TODO update ORCID/funding
\author{Mira Belenkiy}{Circle Research, USA}{mira.belenkiy@circle.com}{}{} %TODO update ORCID/funding

\authorrunning{A. Bennett, P. Vander Vos, D.\,V. Le, and M. Belenkiy}
\Copyright{Austin Bennett, Preston Vander Vos, Duc V. Le, and Mira Belenkiy}

\ccsdesc[500]{Security and privacy~Distributed systems security}
\keywords{Sybil attack, DAO governance, concave voting, token-weighted voting}
\category{}
\relatedversion{}

\hideLIPIcs  %uncomment to remove LIPIcs branding (e.g. for arXiv preprint)

\EventEditors{John Q. Open and Joan R. Access}
\EventNoEds{2}
\EventLongTitle{6th Conference on Advances in Financial Technologies (AFT 2026)}
\EventShortTitle{AFT 2026}
\EventAcronym{AFT}
\EventYear{2026}
\EventDate{October 2026}
\EventLocation{TBD}
\EventLogo{}
\SeriesVolume{42}
\ArticleNo{0}
\newcommand{\preston}[1]{\textcolor{violet}{\textbf{Preston:} #1}}

\def\eg{\textit{e.g.}\xspace}
\def\etal{\textit{et~al.}\xspace}
\def\ie{\textit{i.e.}\xspace}

\newcommand{\pparagraph}[1]{\vspace{0.5em}\noindent\textbf{#1.}}

\newtcolorbox{insightbox}[1][Takeaway]{%
    breakable,
    colback=black!5,
    colframe=black!55,
    boxrule=0pt,
    leftrule=2.5pt,
    arc=0pt,
    sharp corners,
    left=8pt, right=8pt, top=5pt, bottom=5pt,
    fontupper=\small,
    before upper={\noindent\textbf{#1.}\ },
}

\begin{document}
\nolinenumbers
\todototoc
%\listoftodos

\maketitle

\begin{abstract}

Decentralized Autonomous Organizations (DAOs) run protocol governance
by letting token holders vote on proposals. The dominant rule, voting power
proportional to wallet balance, concentrates control among a small number of
large holders, fueling the \emph{token-control} governance attacks
that have already compromised real protocols. To counter this concentration,
the community has turned to \emph{anti-plutocratic voting mechanisms}
such as Quadratic Voting (QV), which assign sublinear voting power per token with the goal of dampening the influence of large holders.

In this work, we prove that no voting rule that derives power solely from wallet balance can succeed on a permissionless blockchain. Through a costed model of on-chain voting that captures realistic
blockchain frictions, including per-wallet splitting and voting costs, fixed setup
costs, and minimum-balance requirements, we show that whenever a wallet of any
size yields nonzero voting power, a Sybil attacker who splits tokens across
many wallets achieves total voting power that grows at least linearly in their
token holdings. For concave rules
that are actually proposed to dampen governance power---those that are positive, increasing, and finite---we show that the optimal strategy yields power that is \emph{asymptotically linear} in token holdings, \emph{regardless} of the cost scheme. 
% Using this result, we further obtain a
% closed-form expression for the tight asymptotic slope for power, quadratic, and logarithmic voting. 

Instantiating the model on real DAOs reveals attack costs orders of magnitude
below the value at stake. Replaying the ten most recent finalized proposals of
five major DAOs (ENS, Compound, Uniswap, Arbitrum, and ZKsync) under linear,
quadratic, logarithmic, and power-($\beta = 0.25$) voting, we measure Sybil
amplification factors between $1{,}172\times$ and $4{,}039\times$ under
Quadratic Voting, and exceeding $229{,}000\times$ under steeper power rules. On
Uniswap specifically, a Sybil-optimal attacker captures a \$300M vote for
roughly \$75K in token and gas costs; a single-wallet attacker with the same
budget achieves only $\approx 1/259$ of the voting power needed to win.

\keywords{Sybil Attack, DAO Governance, Decentralized Finance}

\end{abstract}

% !TEX root = ../main.tex
\section{Introduction}
% \todo{@Duc: make AFT version - anonymous and within page limit}
\todo{@Duc: post full version to arXiv}

{Decentralized Autonomous Organizations (DAOs)} have become the
dominant governance structure for on-chain protocols. Collectively, they  direct the evolution
of many of the most widely used decentralized-finance (DeFi) protocols 
including ENS~\cite{ens}, Compound~\cite{compound}, Uniswap~\cite{uniswap}, Aave~\cite{aave}, Lido~\cite{lido}, and
MakerDAO~\cite{makerdao} as well as the
ecosystem governance of Layer-2 rollups such as Arbitrum~\cite{arbitrum}, ZKsync~\cite{zksync} and
Optimism~\cite{optimism}. 
Billions of dollars sit in these protocols, but they are not controlled by a single entity. Instead, they're protected by the governance decisions of token-holding communities. The legitimacy of a DAO therefore
rests on a simple premise: its voting process faithfully aggregates
the preferences of participants and resists manipulation by an adversary.
When this premise fails, the consequences are immediate and severe.

\pparagraph{Securing DAO governance is hard}
Real-world incidents make this danger concrete. On 20 May 2023, an attacker
imitated a previously accepted proposal on the \emph{Tornado Cash} DAO,
slipping in a self-destruct modification that, once
passed, replaced the governance contract with malicious code and drained
TORN tokens from the treasury~\cite{behnke2023tornado,dotan2023vulnerable}.
Six months later, in November 2023, the \emph{Indexed Finance} DAO was
targeted by two consecutive governance attacks in which the adversary
purchased NDX tokens on decentralized exchanges, self-delegated the
voting power, and submitted a proposal intended to seize the protocol's
timelock. The community narrowly averted the attack, but was forced to pass
a defensive proposal that would render its own treasury permanently
inaccessible~\cite{abrams2023indexed}. These are not isolated events. A
recent systematization by Feichtinger
et~al.~\cite{feichtinger2024sok} discovered 28 real-world attacks across
four blockchains, roughly corresponding to four evenly distributed attack vectors---\emph{bribing}, \emph{token control}, \emph{human-computer interaction}, and \emph{code/protocol vulnerability}. Securing DAO governance is a multi-faceted problem with often subtle exploits.

\pparagraph{The plutocracy problem}
One potential enabler of these attacks is the \emph{concentration of voting
power}. Most DAO governance systems follow a one token, one vote mechanism. In principle, this includes every participant's share equally; but in practice, it is dominated by a small number of large holders. Compounding this issue, empirical data shows that proposals only receive about $5\%$ of the total token supply on average~\cite{falk}. This concentration directly fuels the entire \emph{token control} (TC) family of attack vectors in the SoK taxonomy: an adversary can buy tokens on the open market (TC1), borrow them against collateral (TC2) or via flash loans (TC3), rely on a previously inactive whale suddenly delegating its balance (``whale activation'', TC4), or coordinate a majority coalition (TC5)~\cite{feichtinger2024sok}. The
common denominator is that voting power in canonical DAO designs is
\emph{linear} in token holdings, so whoever holds (or can temporarily
acquire) more tokens wins.

\pparagraph{Quadratic Voting as a proposed defense}
To dampen this plutocratic dynamic, the blockchain community has
turned to anti-plutocratic mechanisms like \emph{Quadratic Voting}~(QV), introduced by Lalley
and Weyl~\cite{qv}. Under QV, a participant committing $w$ tokens
receives only $\sqrt{w}$ votes, so the marginal additional
influence decreases with holdings. In a single-identity
setting, this provably leads to socially optimal collective decisions
and shifts the balance of governance away from the largest holders.
Motivated by this property, QV has been proposed as a defense against
token-concentration attacks on DAOs~\cite{dimitri}. Crucially, however, QV was designed for settings in which each voter has a \emph{single}, \emph{stable} identity. Lalley and Weyl themselves caution that high-stakes applications ``remain far more speculative and are not advisable without further experimentation''~\cite{qv}. In securing billions of dollars without a centralized backstop, permissionless blockchains provide the exact environment that the authors warn against.

\pparagraph{Sybil Attacks}
% \austin{I pared down the sybil in our contributions, I'd rather it be specifically called out because I don't think it flows well otherwise. Do you want to trim where you think is necessary in Our Contributions?}
% Duc: ok
% i dun like the sentence "it is well known QV collapse under sybil? is it true? citation?
The main danger is a Sybil Attack~\cite{sybil}, where an adversary artificially inflates its power in a decentralized system by creating fake identities.
Since permissionless blockchains cannot reliably tie wallets to identity,
the single-identity assumption underlying QV and many other anti-plutocratic mechanisms does not hold.
% it's well-known that the Quadratic Voting mechanism collapses under a Sybil Attack.
We show that an attacker can split their tokens across multiple wallets to obtain disproportionate voting power that the blockchain cannot trace back to a single identity.
For this reason, large-scale implementations of QV-style mechanisms have primarily been in non-governance applications (e.g.\ Gitcoin's Quadratic Funding~\cite{gitcoin}). Still, many proponents of anti-plutocratic governance mechanisms believe that modifications to Quadratic Voting could be viable. Commonly proposed solutions include less severe concave mechanisms or anti-Sybil features such as high wallet minimum balances or increased voting and wallet generation costs.

\pparagraph{Our contributions}
In this paper, we ask whether Quadratic Voting or any other anti-plutocratic mechanism can withstand a Sybil adversary on a permissionless chain, and answer in the negative. We find that \emph{no} wallet-level voting rule can avoid plutocracy under a rational Sybil attacker. Whenever a wallet of any size yields nonzero voting power, the attacker's Sybil-adjusted voting power grows at least linearly in their token holdings, regardless of the per-wallet splitting cost, voting cost, fixed setup cost, or minimum-balance requirement. For the concave rules actually proposed for anti-plutocratic governance (i.e. those that are increasing, finite, and positive), we sharpen this result to show optimal voting power is asymptotically linear in token holdings with a closed-form expression. 
%Anti-plutocratic governance is therefore structurally impossible at the wallet layer.
%; the only effective remedy is an external identity layer that limits how many wallets a single participant can control.
Concretely, we make five contributions:

\begin{enumerate}
  \item \textbf{A unified, costed model of wallet-based voting (\Cref{sec:model}).}
    We introduce a model that captures realistic blockchain frictions---per-wallet
    splitting costs, voting gas costs, fixed setup costs, and a minimum balance
    required to vote---for an \emph{arbitrary} wallet-level voting rule $f$.
    We formalize the notion of a \emph{plutocratic} voting rule via the
    asymptotic slope of the attacker's Sybil-adjusted voting power, providing
    a single definition that applies uniformly across linear, concave, and
    other rules.

  \item \textbf{A universal impossibility result (\Cref{sec:robustplutocracy}).}
    We prove that \emph{every} nontrivial wallet-based voting rule is plutocratic
    under unrestricted Sybil splitting, and that this conclusion is robust to
    any linear cost scheme. That is: as long as some wallet size yields
    nonzero voting power, an attacker holding $a$ tokens achieves Sybil-adjusted
    voting power $V^*(a) = \Omega(a)$. The result rules out the entire
    wallet-based design space of anti-plutocratic mechanisms in one stroke,
    independent of the specific shape of $f$ or the cost scheme imposed on top.

  \item \textbf{A tight asymptotic slope for concave rules (\Cref{sec:proof}).}
    For the concave rules actually proposed for anti-plutocratic governance,
    we sharpen the universal lower bound to a tight asymptotic slope: as the
    attacker's budget grows, their per-token amplification converges to an
    explicit constant determined entirely by the voting rule and the per-wallet cost. We give closed-form expressions for this constant under the three concave rules --- power, quadratic, and logarithmic voting.
    % For concave rules in our class we sharpen the universal bound to a tight asymptotic slope.
    % Via a Jensen-inequality argument we show that an even wallet split is
    % optimal, derive the optimal per-wallet balance and wallet count in closed
    % form, and prove that $V^*(W) / (a - p) \to \kappa$ asymptotically, with
    % $\kappa = \sup_{x \ge m} f(x)/(x + v + s)$ determined entirely by $f$
    % and the per-wallet cost. We give closed-form expressions for $\kappa$
    % under the three concave rules commonly proposed in DAO governance:
    % power, quadratic, and logarithmic voting.

  \item \textbf{Empirical impact across major DAOs (\Cref{sec:impact}).}
    We instantiate our model with on-chain voting data from the ten most
    recent finalized proposals of five governance-heavy DAOs---ENS, Compound,
    and Uniswap (on Ethereum), together with Arbitrum and ZKsync---and
    per-proposal gas prices on each chain. Under Quadratic Voting, we
    measure Sybil amplification factors between $1{,}172\times$ and
    $4{,}039\times$ across these protocols; under a steeper power rule
    ($\beta = 0.25$) the amplification exceeds $229{,}000\times$. Concretely,
    the Sybil-optimal attacker captures a \$300M Uniswap vote under QV for
    roughly \$75K in token and gas costs, while a single-wallet attacker
    with the same budget achieves only $\approx 1/259$ of the voting power
    needed to win.

  \item \textbf{Mitigation discussion (\Cref{sec:discussion}).}
    We survey candidate defenses---wallet minimum balances, increased
    per-wallet fees, Proof-of-Personhood, and compositions of multiple
    mechanisms---and show that no per-wallet economic friction can avoid
    asymptotic plutocracy: per-wallet costs only reduce the slope $\kappa$,
    they do not change its order. Anti-plutocratic governance on a
    permissionless chain therefore requires an identity-based Sybil-resistance
    layer that limits \emph{wallet count} rather than per-wallet behavior.
    Concave voting must be deployed only as one component of such a composite
    mechanism, never as the primary line of defense.
\end{enumerate}

\section{Preliminaries}
\label{sec:prelim}
% \todo{@Austin: can you review this subsections? goal is to make the paper self-contain}

Before developing our model and main result, we review the two pieces of
background a reader needs to follow the rest of the paper: the lifecycle of
on-chain DAO governance, and the Quadratic Voting mechanism that motivates our
generalization.

\subsection{Decentralized Autonomous Organizations}

A \emph{Decentralized Autonomous Organization (DAO)} is an organizational
structure that facilitates trustless, on-chain governance of blockchain-based projects. Today, DAOs govern major blockchain protocols, including Aave, Compound, ENS, Lido, MakerDAO, and Uniswap, and collectively control treasuries estimated to hold \$8B~\cite{defillama}. Instead of being vested in a centralized board or executive, DAOs distribute decision-making authority among holders of a designated \emph{governance token}.

\pparagraph{Governance Tokens}
Governance tokens confer voting rights on their holders. These tokens are
initially distributed through various means, such as airdrops to early
users or allocations to founding teams and investors. Once launched, governance tokens are freely tradable on both centralized and decentralized exchanges, allowing anyone to acquire voting power. Most DAOs follow a token voting model similar to Governor Bravo~\cite{governorbravo}, which allows for three vote categories ---\emph{for}, \emph{against}, or \emph{abstain} --- in addition to other governance features, such as time locks. Under this system, each token counts as one vote, making voting power directly proportional to token holdings.

\pparagraph{Delegation}
Many DAOs allow token holders to \emph{delegate} their voting
power to a representative who votes on their behalf. In some DAOs (e.g.,
Compound, ENS, Uniswap), delegation is mandatory: tokens must be delegated to an address (possibly the holder's own) before they can be used to vote. This design reduces the transaction costs of governance participation. Organizations like a16z's token delegate program~\cite{a16z} use delegation to decentralize power; but in other cases, delegation can concentrate voting power in a small set of delegates. 

\pparagraph{Proposals}
In many Governor Bravo-style governance systems, token holders whose balance exceeds a predefined \emph{proposal threshold} may submit a governance proposal. A proposal specifies a set of on-chain actions (e.g., transferring treasury funds, updating protocol parameters) along with a human-readable description. Some DAOs additionally require off-chain deliberation, such as discussion on governance forums or temperature-check votes on platforms like Snapshot~\cite{snapshot}, before a proposal can be submitted on-chain.

\pparagraph{Voting and Execution}
Once a proposal is submitted on-chain, a \emph{proposal delay} (measured in
blocks) elapses before a snapshot of token balances (or delegations) is taken. This snapshot fixes each participant's voting power for the duration of the vote. A \emph{voting period} then opens, during which eligible addresses cast their votes. The voting period typically lasts several days. A proposal passes if it receives a majority of votes in favor and meets a
predefined \emph{quorum} requirement. Accepted proposals may be executed
automatically or, in some DAOs, only after a \emph{timelock delay} that gives stakeholders additional time to react (e.g., exit the protocol) before changes take effect.

\pparagraph{Security Concerns}
Despite the democratic aspirations of DAO governance, the one-token-one-vote
model leads to significant concentration of power. Empirical
studies such as Falk et. al ~\cite{falk} show that voter participation is low: on average, only about 5\% of the total token supply participates in governance votes. This voter apathy, combined with large token holdings by whales and exchanges, makes DAOs susceptible to governance attacks including vote buying, flash-loan attacks, and majority coalitions. These concerns motivate the exploration of alternative voting mechanisms that dampen the influence of large token holders.

\subsection{Quadratic Voting}
\label{sec:prelim:qv}

\emph{Quadratic Voting} (QV) draws on a long line of mechanism-design work on truthful preference revelation, most notably the Vickrey--Clarke--Groves (VCG) tradition, and was popularized in its modern form by Lalley and Weyl~\cite{qv}, who coined the name and gave the formal academic treatment. It addresses two well-known failure modes of ballot-box democracy: \emph{one-person-one-vote} cannot express how strongly a voter prefers one outcome over another, while unconstrained weighted voting degenerates to plutocracy when participants hold vastly different weights. QV sits between them by charging a \emph{quadratic} price for influence: committing more lets a voter express more intensity, but at a sharply rising marginal cost.

\pparagraph{What QV is}
Concretely, a participant committing $w$ tokens receives $\sqrt{w}$ votes; equivalently, purchasing $V$ votes costs $V^{2}$ tokens. The rule $f(w) = \sqrt{w}$ is concave, so a wallet's voting power grows with its balance at a diminishing rate. A whale who doubles their stake gains only $\sqrt{2} \approx 1.41\times$ more votes, while a small holder with $w$ tokens wields $1/\sqrt{w}$ votes per token --- disproportionately more than a whale.

\pparagraph{Why it was proposed}
Lalley and Weyl prove that quadratic cost is the \emph{unique} pricing rule
achieving robust welfare optimality under a price-taking assumption: a voter
equating marginal cost to marginal benefit casts votes proportional to their
true valuation of the proposal. Beyond this theoretical benefit, Lalley and Weyl claim that the mechanism provides a good balance between voter intensity and decentralizing power.
%This theoretical grounding --- rather than an
%ad hoc choice of concave shape --- distinguishes QV from other concave rules
%such as logarithmic or general power voting $f(w) = w^{\beta}$, and is why it
%has been repeatedly proposed as a candidate for decentralized governance.

\pparagraph{Where it is deployed}
Gitcoin implemented a QV-style system with Quadratic Funding~\cite{gitcoin}. Citing issues of plutocracy, they use quadratic scaling in order to allocate funding for different projects. Proponents of QV often herald this as a solution to the overcentralization seen with linear voting, but concerns about collusion and sybil-vulnerability persist. For this reason, Gitcoin uses anti-Sybil mechanisms to prevent an attacker from arbitrarily splitting their tokens into many wallets to gain a disproportionate advantage over the system~\cite{gitcoinpassport}. Collusion, however, remains a threat.

\section{DAO Governance Model}
\label{sec:model}

We present a mathematical model for analyzing DAO governance. We define the concept of a voting rule
that assigns votes to a wallet based on its wealth and give a precise mathematical
definition of a plutocratic voting rule. We show how to measure Sybil-adjusted voting
power. 

We then introduce our model of a cost scheme that captures the frictions a real Sybil attacker faces on a
permissionless blockchain. An
adversary can create additional wallets freely, but each wallet incurs gas to
receive tokens and to cast a vote. In addition, some DAOs impose a threshold below
which a wallet cannot vote at all to deter spam. Building on the model in Bennett \cite{bennett}, our model exposes 
these frictions as a cost scheme alongside the voting rule function. The attacker's optimization problem becomes a function of on-chain costs and protocol design choices. 

We intentionally omit any extraneous utility function.  As previous attacks have shown, the attacker's ultimate utility is measured outside the system through extraneous effects.
We are concerned with a worst-case analysis for governance safety. An attacker seeking to destroy a system  does not care about the nominal value of tokens. Therefore, we evaluate using a simple question: how much voting power can an attacker extract from a wallet?

\subsection{Parameters}
\label{sec:parameters}
A DAO governance instance and a would-be attacker are described by the following parameters, all denominated in the DAO's governance token.

\begin{itemize}
    \item $a \in \mathbb{R}^{+}$ : the total number of governance tokens
      controlled by the participant (the \emph{adversary's budget}).
    \item $W = \{w_1, \dots, w_n\} \in \mathbb{R}^{n+}$ : the balances of the
      $n$ wallets into which the adversary splits the tokens.
    \item $f : \mathbb{R}_{>0} \to \mathbb{R}_{>0}$ : the wallet-level
      vote valuation function defining the voting mechanism. Linear voting
      corresponds to $f(w) = w$; quadratic voting to $f(w) = \sqrt{w}$; power
      voting to $f(w) = w^{\beta}$ with $\beta \in (0, 1)$; logarithmic voting to $f(w) = \ln(w+1)$.
    \item $m \in \mathbb{R}_{\geq 0}$ : the minimum token balance a wallet must
      hold in order to vote (the \emph{voting threshold}). Some DAOs set $m
      = 0$; others require a dust amount to discourage spam.
    \item $v \in \mathbb{R}_{\ge 0}$ : the average per-wallet cost of casting a vote,
      converted into tokens. Empirically this is the gas for a
      \texttt{delegate}, if applicable, and a \texttt{castVote} transaction.
    \item $p \in \mathbb{R}_{\ge 0}$ : the \emph{fixed} (one-shot) cost of
      preparing the Sybil split. This covers, \eg the gas for deploying a
      batch-transfer helper contract, off-chain key generation, or any
      per-attack infrastructure that does not scale with $n$.
    \item $s \in \mathbb{R}_{\ge 0}$ : the average \emph{per-wallet} cost of splitting
      tokens, \ie the gas to transfer tokens from the adversary's main wallet
      into each new wallet (one ERC-20 \texttt{transfer} per wallet).
    \item $C \in \mathbb{R}^4$ : a cost scheme $C=(m,v,p,s)$ imposed on top of a voting
     rule. Intuitively, the rule $f$ defines how many votes a wallet $w$ has, while the
     cost scheme imposes costs to splitting a wallet $a$ into multiple wallets $w$.
\end{itemize}

We assume that there is no other external source of friction (e.g. KYC, proof-of-personhood,
rate-limits) outside of the above parameters. We also enforce that one of $m, v, s > 0$. Our model precisely captures the scenario of an on-chain
permissionless DAO.

\subsection{Voting Schemes}
We define a voting scheme and how to measure the voting power of a wallet holder.
% \preston{let's avoid using $c$ in this as that has a different definition elsewhere}
% \mira{can someone recommend a different variable? }
\begin{definition}[Voting Rule]
A voting rule is a function
\[
f : \mathbb{R}_{> 0} \to \mathbb{R}_{> 0}
\]
that assigns $f(x)$ votes to a wallet holding $x$ tokens. We assume the rule is nontrivial, i.e., there exists some $k > 0$ such that $f(k) > 0$.
\end{definition}

The voting rule simply assigns votes to each wallet. Since a wallet holder can split a larger wallet
into  multiple wallets, we need to consider the actual voting power based on all the possible
splits.

\begin{definition}[Sybil-Adjusted Voting Power]
Given total wealth $a \ge 0$, define the Sybil-adjusted voting power as
\[
V^*(a) = \sup_{\substack{w_1, \dots, w_n \ge m \\ w_1 + \cdots + w_n + n(s + v)  + p\leq a}} \sum_{i=1}^n f(w_i).
\]
\end{definition}

In a plutocratic system, the number of votes increases at least linearly with total
wealth. We use Sybil-adjusted voting power rather the voting rule itself to measure
the advantage.
% $\chi\geq 0$
\begin{definition}[Plutocratic Voting Rule]
A voting rule is said to be \emph{plutocratic} if there exist constants
$\alpha>0$ and $a_0$ such that for all $a\geq a_0$,
\[
V^*(a)\geq \alpha \cdot a.
\]
We call $\alpha$ the \textbf{vote-yield per coin}.
\end{definition}

The constant $\alpha$ helps us measure how many votes a plutocrat can extract from
a wallet $a$. However, it does not accurately measure the plutocratic advantage. Consider
two voting schemes $f(a)=a$ and $f'(a)=a/3$. Both schemes give plutocrats the
same fraction of the total votes but result in different ratios
$\alpha_f = 1$ and $\alpha_{f'} = 1/3$. We use the vote-yield to measure the
asymptotic advantage.

\subsection{Cost Scheme}
A cost scheme $C=(m,v,p,s)$ is imposed on top of a voting rule $f$. As mentioned earlier,
$m$ is the minimum wallet balance allowed, $p$ is the initial setup cost
for splitting into multiple wallets, $s$ is the splitting cost of creating a wallet, and $v$ is the cost of a single wallet casting a ballot.
The cost $p$ is paid once, while $s+v$ are paid for each wallet that votes. So, a voter with $n$ wallets pays $C(n) = p+n(s+v)$ to vote.

% \preston{this section still confuses me a bit. p, s, and v are discussed as fixed in the above paragraph (especially the final forumla). then the definition acts like p, s, and v can be non fixed}
\begin{definition}[Linear Voting Cost]
Let $W=\{w_1,w_2,\ldots,w_n\}$ be a set of $n$ wallets. We say that $C$ is a linear cost scheme 
if the cost $C(W)=\Omega(n)$. A cost scheme $C=(m,v,p,s)$ with fixed values is by definition linear.
\end{definition}

% \todo{@Preston: Need verify and confirm these protocols. I checked babydoge and hop, does not seem that they charge a percentage fee}
% Most tokens have fixed splitting costs that are just the blockchain gas fee. However,
% Tokens such as , Reflect Finance~, 
% EverGrow~\cite{evergrow_whitepaper}, 
% and  charge transfer fees that are a percentage of the value transferred. Bridges and cross-chain protocols, such as Stargate (formerly LayerZero)~\cite{stargate_fees}, Multichain (formerly Anyswap)~\cite{multichain_fees}, 
% and Hop Protocol~\cite{hop_fees}, similarly impose percentage-based transfer fees.

For standard ERC-20 tokens, splitting a balance across multiple wallets incurs linear cost. However, there are some tokens which implement non-linear fees for transfers, such as charging a percentage of the transferred value \cite{reflect_finance}. To the best of our knowledge, we do not know of any DAO governance token that charges non-fixed voting fees.

\section{Plutocracy is Robust to Splitting Costs}
\label{sec:robustplutocracy}

To help motivate the optimal Sybil strategy for our class of concave functions, we first show that all non-trivial wallet-based voting rules are plutocratic. We
note that increasing linear rules such as $f(w)=w$ and increasing convex rules such
as $f(w)=w^2$ are obviously plutocratic because larger wallets get a natural advantage. A
more interesting case is concave rules that give voters an incentive to gain voting power by splitting their wealth into smaller wallets.

We outline the proof:

\begin{enumerate}
    \item We begin with the Sybil Voting Power Lemma  that computes a lower-bound on the voting power of a large wallet
    split into equal $m$-size chunks. \\

    \item This leads us to the Sybil-Induced Plutocracy Theorem that shows that the minimum vote-yield of this even split strategy is a constant, which means wallet-based voting is plutocratic. \\

    \item Finally, the Robustness of Plutocracy to Cost Theorem shows that all linear cost schemes reduce
    to an instance of the Sybil-Induced Plutocracy Theorem.
\end{enumerate}

\begin{restatable}[Sybil Voting Power Lemma]{lemma}{lemSybilPower}\label{lem:sybilpower}
Let $f$ be a non-trivial voting rule for which there exists $m > 0$ such that $f(m) > 0$. Then for all wallets $a \ge m$,
\[
V^*(a) \ge \left\lfloor \frac{a}{m} \right\rfloor f(m).
\]
\end{restatable}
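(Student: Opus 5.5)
The plan is to prove the bound by exhibiting one explicit feasible split and using the fact that $V^*(a)$ is a supremum over all feasible splits, so any feasible split gives a lower bound. The natural candidate is the even split into wallets of exactly $m$ tokens: take $n = \lfloor a/m \rfloor$ and set $w_1 = \cdots = w_n = m$. Since $a \ge m$, we have $n \ge 1$, so this is a genuine (nonempty) split.

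The first step is to check feasibility against the constraints in the definition of Sybil-adjusted voting power. I read the lemma in the frictionless setting, where the per-wallet costs $s,v$ and the setup cost $p$ are zero and the voting threshold is at most $m$; the paper's outline says the cost terms are handled afterwards by the Robustness of Plutocracy to Cost Theorem. Each wallet then satisfies $w_i = m \ge$ threshold. The total balance is $nm = \lfloor a/m \rfloor m \le a$, so the budget constraint holds. The leftover $r = a - nm \in [0,m)$ is simply left unallocated, which the inequality ``$\le a$'' in the definition allows. If one insisted on allocating everything, I would instead add $r$ to one wallet, but that would require monotonicity of $f$, which the lemma does not assume. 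Leaving the remainder idle avoids that hypothesis.

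The second step is the evaluation. The objective for this split is $\sum_{i=1}^n f(m) = \lfloor a/m \rfloor f(m)$. Because $V^*(a)$ is the supremum of the objective over all feasible splits, it is at least this value, which is the claimed bound. Positivity of $f(m)$ is not needed for the inequality itself; it only matters later, when the bound is converted into a linear lower bound $V^*(a) \ge \alpha a$.

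The main obstacle is not technical but one of interpretation: the symbol $m$ is overloaded. In the lemma it is a point where $f(m)>0$, while in the definition of $V^*$ it is the minimum-balance threshold, and the definition also carries the costs $s,v,p$. I would state explicitly that the lemma concerns the zero-cost scheme with threshold at most the chosen $m$, so the wallets of size $m$ are admissible. If the threshold version with costs were intended, the same construction would go through with $n = \lfloor (a-p)/(m+s+v) \rfloor$, giving a bound of the same form. That generalization is the content deferred to the robustness theorem.
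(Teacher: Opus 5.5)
Your proof is correct and is essentially the paper's argument. Both split into $\lfloor a/m \rfloor$ wallets of size exactly $m$, discard the sub-$m$ remainder, and use that $V^*(a)$ is a supremum over feasible splits. Your explicit remark that the lemma must be read in the zero-cost setting with voting threshold at most $m$ is a clarification the paper leaves implicit, deferring costs to \Cref{theorem:combined}; it is needed because with $p>0$ or $s+v>0$ the stated bound already fails at $a=m$.
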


\begin{restatable}[Sybil-Induced Plutocracy]{theorem}{thmPlutocratic}\label{theorem:plutocratic}
Every nontrivial voting rule is plutocratic under unrestricted Sybil wallet splitting.
Specifically, if $\exists m>0: f(m) > 0$, then for all $a \ge 2m$, there exists a constant $\alpha > 0$ such that $V^*(a) \ge \alpha \cdot a$.
\end{restatable}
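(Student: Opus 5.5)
The plan is to derive the theorem directly from the Sybil Voting Power Lemma (\Cref{lem:sybilpower}) by converting its floor-based bound into a linear one. Fix $m>0$ with $f(m)>0$. Here $m$ is the wallet size supplied by the nontriviality hypothesis, not necessarily the voting threshold. In the setting of this section the lemma applies to any such $m$. So for every $a \ge m$ it gives
\[
V^*(a) \ge \left\lfloor \frac{a}{m} \right\rfloor f(m).
\]
The remaining work is to replace $\lfloor a/m\rfloor$ by a constant multiple of $a$.

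The key step is the elementary estimate $\lfloor x\rfloor \ge x/2$ for all $x \ge 1$. To see it, write $x = k + r$ with $k=\lfloor x\rfloor \ge 1$ and $0\le r<1$. Then $x < k+1 \le 2k$. Applying this with $x = a/m$, for all $a\ge m$ we get
\[
V^*(a) \ge \frac{f(m)}{2m}\, a .
\]
The theorem asks only for $a\ge 2m$, where the weaker estimate $\lfloor x\rfloor \ge x-1 \ge x/2$ for $x\ge 2$ also suffices. Either route works, and I will take $a_0=2m$ to match the statement.

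Setting $\alpha = f(m)/(2m)$ gives a constant that is strictly positive, because $f(m)>0$ and $m>0$. It depends only on $f$ and $m$, not on $a$. Hence $V^*(a)\ge \alpha a$ for all $a\ge a_0$, which is exactly the definition of a plutocratic voting rule. The constant $\alpha$ is the vote-yield per coin.

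I expect no serious obstacle. The only delicate point is a bookkeeping one: the lemma is stated without the per-wallet costs $s,v,p$, so this theorem is the cost-free (``unrestricted splitting'') case. Incorporating costs is deferred to the Robustness theorem, where I would rerun the same argument with chunk size $m+s+v$ and budget $a-p$.
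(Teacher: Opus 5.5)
Your proof is correct and is essentially the paper's argument: invoke the Sybil Voting Power Lemma, bound $\lfloor a/m\rfloor \ge a/(2m)$, and take $\alpha = f(m)/(2m)$. Your two additions are harmless refinements: you note that the floor estimate already holds for $a \ge m$, and you separate the nontriviality witness $m$ from the voting threshold, which the paper conflates.
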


\begin{restatable}[Robustness of Plutocracy to Costs]{theorem}{thmRobust}\label{theorem:combined}
Suppose there is a non-trivial voting scheme combined with a cost scheme $C=(m,v,p,s)$ where
there are fixed costs of additional wallet creation $p\ge 0$, a wallet splitting cost $s\ge 0$ per
wallet, a voting cost $v\ge 0$ per wallet, and a minimum wallet size $m$. The Sybil-adjusted
voting power is still plutocratic and satisfies:
\[
V_C^*(a)=\Omega(a).
\]
\end{restatable}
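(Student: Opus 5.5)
The plan is to reduce the costed problem to an explicit feasible split and then reuse the floor-counting argument of \Cref{lem:sybilpower}. Since $V_C^*(a)$ is a supremum over feasible splits, any one feasible split gives a lower bound. We therefore construct a uniform split that respects the minimum balance $m$ and charges each wallet its full per-wallet cost $s+v$.

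First we fix a chunk size. The voting rule is nontrivial, so there is some $k>0$ with $f(k)>0$. The minimum balance may make $k$ infeasible, so we would rather use a size $x\ge m$ with $f(x)>0$. Because $f$ maps into $\mathbb{R}_{>0}$, every positive $x$ has $f(x)>0$. We take $x=\max(m,k)$, or $x=\max(m,1)$ when $m=0$, which keeps $x>0$. Each wallet then consumes an effective budget of $c:=x+s+v>0$ tokens: $x$ sits in the wallet and $s+v$ is paid in fees.

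Next we bound the power from below. For $a\ge p+c$, set $n=\lfloor (a-p)/c\rfloor\ge 1$ and give each of $n$ wallets balance $x$. The constraint holds, since $nx+n(s+v)+p=nc+p\le a$ and $x\ge m$. Hence
\[
V_C^*(a)\ \ge\ \left\lfloor \frac{a-p}{c}\right\rfloor f(x)\ \ge\ \left(\frac{a-p}{c}-1\right) f(x).
\]
This is exactly \Cref{lem:sybilpower} applied with chunk size $c$ to the shifted budget $a-p$.

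The main obstacle is handling the shift by $p$ and the floor together. As in the proof of \Cref{theorem:plutocratic}, we absorb both into the constant. For $a\ge a_0:=2(p+c)$ we have $a-p-c\ge a/2$, so
\[
V_C^*(a)\ \ge\ \frac{f(x)}{2c}\,a.
\]
This gives plutocracy with vote-yield $\alpha=f(x)/(2(x+s+v))>0$, hence $V_C^*(a)=\Omega(a)$. Finally, we would remark that a general linear cost scheme with $C(W)=O(n)$ per wallet, and at most a constant offset, fits the same template. The reason is that only an upper bound of the form $p+n c'$ on the cost of the uniform split is needed, so the argument covers every cost scheme whose per-wallet cost is bounded.
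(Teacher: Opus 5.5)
Your proof is correct and follows essentially the same route as the paper. The paper also replaces $a$ by $a-p$, treats the per-wallet cost $s+v$ as enlarging the chunk to $m+s+v$, and reruns the floor-counting argument of \Cref{lem:sybilpower} and \Cref{theorem:plutocratic}. Your version is slightly more careful in two places: choosing the chunk $x=\max(m,k)$ avoids the paper's implicit reliance on $f(m)>0$ (undefined when $m=0$), and the explicit $a_0=2(p+c)$ and $\alpha=f(x)/(2c)$ absorb $p$ and the floor directly instead of via an $\Omega$ argument.
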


{Due to space constraints, all proofs are deferred to \Cref{sec:missing-proofs}.}

\section{Turning Concave Mechanisms Linear}
\label{sec:proof}
In the previous section, we proved that all non-trivial voting rules are plutocratic
under a Sybil attack, even if accounting for a linear cost function. This proves a loose lower bound on available power. In this section, we demonstrate that a simple attack---splitting wallets into equal parts---is optimal for most useful concave voting rules (i.e. those that are finite, positive, and increasing in token size). We then prove an optimal attacker's power is asymptotically linear, reversing the intended anti-plutocratic effect beyond a lower bound.

Notice that convex and linear rules like $f(w)=w^2$ and $f(w)=w$, respectively, are already plutocratic. Therefore, a wallet holder gains no benefit from Sybil splitting under those regimes. Beyond this, virtually all on-chain voting rules are finite, positive, and increasing to reward governance participation as well as prevent trivial exploitation. We ignore non-concave and non-convex, disjoint, and other non-standard rules as those require custom attacks to achieve optimal results. This leaves our set of concave functions for analysis.

\pparagraph{Main result} 
\emph{Every finite, positive, increasing, concave
wallet-level voting rule $f$ collapses to asymptotically linear voting power
under a Sybil attack.}

\subsection{High-Level Description}

Concretely, \Cref{thm:asymp-linear} below shows that the
attacker's optimal aggregate voting power $V^{*}(W)$ satisfies $V^{*}(W) \le
\kappa\,(a - p)$, with matching asymptotic tightness as $a \to \infty$. The
constant $\kappa$ depends only on $f$ and the per-wallet cost $v + s$.
Concavity, therefore, buys a constant-factor dampening, but \emph{no} sublinear
dampening. A sufficiently wealthy attacker can effectively convert any
concave rule back into linear voting power. The argument decomposes into three steps, mirroring the attack stages of \Cref{fig:attack-flow}:

\begin{description}
    \item[Step 1 (\Cref{lem:even}).] For a fixed number of wallets $n$, the sum $\sum_i f(w_i)$ is maximized by an even split $w_i = a/n$. This is a direct consequence of Jensen's inequality and rules out any ``clever'' unequal allocation. \\
    
    \item[Step 2 (\Cref{lem:optimal-wallet-token-amount}).] Given the even-split structure and the costed budget constraint $p + n(v+s) + \sum_i w_i \le a$, the optimal per-wallet balance is $w^{*} = \tfrac{a-p}{n} - v - s$ when the minimum bound holds with slack. This reduces the attacker's problem to choosing the \emph{number} of wallets $n$, which we optimize over to find maximum power. \\
    
    \item[Step 3 (\Cref{thm:asymp-linear}).] Optimizing over $n$ yields $V^{*}(W) \le \kappa\,(a - p)$ with $\kappa = \sup_{x\ge m} f(x)/(x + v + s)$, and this bound is asymptotically tight as $a \to \infty$. The concave rule is thus \emph{asymptotically linear} in the attacker's token holdings; concavity buys a constant factor, not a sublinear dampening.
\end{description}

The three subsections below execute these steps in order. A reader interested
only in the final result can skip to \Cref{thm:asymp-linear} after reading
the ``Takeaway'' paragraph of each subsection.

\subsection{The Sybil Attack, Step by Step}
\Cref{fig:attack-flow} traces the Sybil attacker's strategy under the cost scheme $C = (m, v, p, s)$ from \Cref{sec:parameters}. Starting from a single wallet holding $a$ tokens, the attacker:

\begin{enumerate}
    \item Pays the fixed setup cost $p$;
    \item Divides the remaining $a - p$ tokens evenly across $n$ new wallets;
    \item Pays $s$ per wallet to fund each new wallet;
    \item Pays $v$ per wallet to cast a vote;
    \item Casts one vote per wallet, for $n$ votes total.
\end{enumerate}

After all costs, each wallet holds $w^{*} = \tfrac{a-p}{n} - v - s$ tokens at vote time and contributes $f(w^{*})$ votes, so the attacker's aggregate voting power is
\[
V^{*}(W) \;=\; n\,f\!\left(\frac{a-p}{n} - v - s\right).
\]

\Cref{theorem:plutocratic} (Sybil-Induced Plutocracy) already establishes that for some choice of $n$, this strategy achieves a vote-yield $\alpha > 0$ for any nontrivial $f$. What is left open --- and what we sharpen below for concave $f$ --- is the \emph{tight} asymptotic slope $\kappa$: the maximum amplification an optimizing attacker actually attains in the limit. The lemmas in the next three subsections close this gap by, in order, showing that the even split assumed in step~(ii) is optimal (\Cref{lem:even}), that the per-wallet balance $w^{*}$ above is optimal once $n$ is fixed (\Cref{lem:optimal-wallet-token-amount}), and that optimizing over $n$ yields $V^{*}(W) \le \kappa\,(a - p)$ with matching asymptotic tightness (\Cref{thm:asymp-linear}).

% Sybil attack accounting diagram.
% Referenced by sections/model.tex as \Cref{fig:attack-flow}.
\begin{figure}[!htbp]
\centering
\begin{tikzpicture}[
    scale=0.92, transform shape,
    >={Stealth[length=2mm]},
    font=\small,
    node distance=5mm and 5mm,
    wallet/.style={draw, rounded corners=2pt, minimum width=14mm, minimum height=8mm, align=center, fill=blue!5},
    bigwallet/.style={wallet, minimum width=24mm, fill=blue!10},
    vote/.style={draw, circle, minimum size=6mm, inner sep=0pt, fill=green!15},
    cost/.style={font=\footnotesize\itshape, text=red!70!black},
    note/.style={font=\footnotesize, align=center},
    cat/.style={font=\scriptsize\itshape, text=gray!70!black},
]

% --- Stage 1: adversary holdings -------------------------------
\node[bigwallet] (A) at (0,0) {$a$ tokens\\\footnotesize (single holding)};

% --- Stage 2: after paying fixed split cost p ------------------
\node[bigwallet, below=10mm of A] (Ap) {$a - p$};
\draw[->] (A) -- node[right, cost] {$-\,p$ \;(fixed split cost)} (Ap);

% --- Stage 3: n wallets after splitting (each transfer costs s) ---
\node[wallet, below left=12mm and 14mm of Ap] (W1) {$\tfrac{a-p}{n}{-}s$};
\node[wallet, below=12mm of Ap] (W2) {$\tfrac{a-p}{n}{-}s$};
\node[below=12mm of Ap, xshift=14mm] (Wdots) {$\cdots$};
\node[wallet, below right=12mm and 14mm of Ap] (Wn) {$\tfrac{a-p}{n}{-}s$};

\draw[->] (Ap.south) -- (W1.north);
% \draw[->] (Ap.south) -- node[right, cost, xshift=1mm] {$-s$ (splitting gas)} (W2.north);
\draw[->] (Ap.south) -- (W2.north);
\draw[->] (Ap.south) -- node[right, cost, xshift=1mm] {$-s$ (splitting gas)} (Wn.north);

% --- Stage 4: pre-fund voting cost v (paid when cast at step v) ---
\node[wallet, below=14mm of W1, fill=yellow!10] (W1s) {$w^{*}$};
\node[wallet, below=14mm of W2, fill=yellow!10] (W2s) {$w^{*}$};
\node[below=14mm of W2, xshift=14mm] (Wsdots) {$\cdots$};
\node[wallet, below=14mm of Wn, fill=yellow!10] (Wns) {$w^{*}$};
% Definition of w* shown on the right of the row
\node[note, anchor=west, font=\footnotesize, xshift=2mm] at (Wns.east) {where $w^{*} = \tfrac{a-p}{n} - v - s$};

\draw[->] (W1) -- (W1s);
\draw[->] (W2) -- (W2s);
\draw[->] (Wn) -- node[right, cost, xshift=1mm] {$-\,v$ (voting gas)} (Wns);

% (Feasibility constraint $w^{*} \geq m$ is described in the caption.)

% --- Stage 5: aggregate voting power ---------------------------
\node[vote, below=9mm of W1s] (V1) {$f(w^{*})$};
\node[vote, below=9mm of W2s] (V2) {$f(w^{*})$};
\node[below=9mm of W2s, xshift=14mm] (Vdots) {$\cdots$};
\node[vote, below=9mm of Wns] (Vn) {$f(w^{*})$};

\draw[->] (W1s) -- (V1);
\draw[->] (W2s) -- (V2);
\draw[->] (Wns) -- node[right, font=\scriptsize\itshape, text=red!70!black, xshift=1mm] {apply $f$} (Vn);

% Total aggregation
\node[draw, rounded corners, below=9mm of V2, xshift=14mm, fill=green!10, minimum width=60mm, align=center] (Vtot)
    {\scriptsize\textit{Aggregate voting power}\\[0.3ex]$V^{*}(W) \;=\; n\,f\!\left(\dfrac{a-p}{n} - v - s\right)$};

\draw[->] (V1) -- (Vtot);
\draw[->] (V2) -- (Vtot);
\draw[->] (Vn) -- (Vtot);

% Stage labels on the far left, all left-aligned at the same x-coordinate
% so the numerals (i), (ii), ... line up vertically and text flows rightward.
\coordinate (labelX) at ([xshift=-38mm]W1.west);
\node[note, anchor=west, align=left] at (labelX |- A.center)   {\textbf{(i)} Adversary holdings};
\node[note, anchor=west, align=left] at (labelX |- Ap.center)  {\textbf{(ii)} Pay setup cost $p$};
\node[note, anchor=west, align=left] at (labelX |- W1.center)  {\textbf{(iii)} Split into $n$ wallets};
\node[note, anchor=west, align=left] at (labelX |- W1s.center) {\textbf{(iv)} Pre-fund voting gas};
\node[note, anchor=west, align=left] at (labelX |- V1.center)  {\textbf{(v)} Cast $n$ votes};

% Legend explaining the balance vs. voting-power distinction
\node[note, anchor=west, align=left, font=\scriptsize, text=gray!70!black]
    at ([yshift=4mm]labelX |- A.north)
    {boxes $=$ \textit{wallet balance};\quad circles $=$ \textit{voting power} $f(\cdot)$};

\end{tikzpicture}
\caption{Sybil attack accounting under a concave wallet-level voting rule $f$.
Rows (i)--(iv) track each wallet's \emph{token balance}; row (v) is the resulting \emph{voting power}, obtained by applying $f$ to the balance. An adversary with $a$ tokens pays a fixed setup cost $p$ (\eg deploying a splitter contract), and divides the remainder \emph{evenly} across $n$ wallets (optimal by \Cref{lem:even}). The splitting cost $s$ is paid at step (iii) as it is the gas for the ERC-20 transfer into each new wallet. So the post-split wallet balance is $\tfrac{a-p}{n} - s$. The voting cost $v$ is \emph{physically} paid at step (v) when \texttt{castVote} is called; however, because gas is funded by pre-selling governance tokens at attack-start, each wallet's snapshot-time balance already reflects this pre-deduction, giving $w^{*} = \tfrac{a-p}{n} - s - v$ at step (iv). Every wallet that satisfies the voting threshold $w^{*} \geq m$ casts one vote of weight $f(w^{*})$, yielding aggregate voting power $V^{*}(W) = n\,f\!\bigl(\tfrac{a-p}{n} - v - s\bigr)$. The adversary chooses $n$ to maximize this quantity.
}
\label{fig:attack-flow}
\end{figure}

\subsection{Even Wallet Splits Are Optimal}
The intuition is purely economic: a concave $f$ gives \emph{diminishing marginal returns} in voting power per token, so any wallet holding more than the group average is wasting its marginal tokens. The excess tokens would earn strictly more voting power if moved to a wallet below the average. Jensen's inequality converts this intuition into a mathematical relation.

\begin{restatable}[Even-split optimality]{lemma}{lemEven}\label{lem:even}
Let $f:\mathbb{R}_{> 0}\to\mathbb{R}_{> 0}$ be concave. Fix a total token amount $a>0$ and an integer $n\ge 1$. Among all allocations $(w_1,\dots,w_n)$ satisfying $w_i\ge 0$ and $\sum_{i=1}^n w_i = a$, the sum of voting power $\sum_{i=1}^n f(w_i)$ is maximized by an even split: $w_i = \frac{a}{n}, \forall i$.
\end{restatable}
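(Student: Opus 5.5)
The argument is a single application of Jensen's inequality to the uniform distribution on the $n$ wallet balances. Fix an allocation $(w_1,\dots,w_n)$ with $w_i\ge 0$ and $\sum_i w_i = a$. Give each point $w_i$ weight $1/n$. The finite form of Jensen's inequality for a concave function then gives
\[
\frac{1}{n}\sum_{i=1}^n f(w_i)\;\le\; f\!\left(\frac{1}{n}\sum_{i=1}^n w_i\right) \;=\; f\!\left(\frac{a}{n}\right).
\]
Multiplying by $n$ yields $\sum_i f(w_i)\le n f(a/n)$. The even split $w_i=a/n$ is itself feasible, since $a/n>0$, and it attains exactly $n f(a/n)$. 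So it is a maximizer, which is the claim. I would derive the $n$-point Jensen inequality by a short induction from the two-point definition of concavity. The inductive step writes the average of $n$ points as a convex combination of the average of the first $n-1$ points, with weight $(n-1)/n$, and the point $w_n$, with weight $1/n$.

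The main obstacle is the domain mismatch. The statement takes $f$ defined and concave only on $\mathbb{R}_{>0}$, yet it allows allocations with some $w_i=0$, where $f(0)$ is undefined. I would handle this in one of two ways. The first is to read a zero-balance wallet as simply not voting and contributing $0$. The second is to extend $f$ to $0$ by $f(0):=\lim_{x\downarrow 0} f(x)$; this limit exists in $[0,\infty)$ because a positive concave function on $(0,\infty)$ is nondecreasing near $0$ from the right or has a finite limit there.

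Under the first convention, suppose exactly $k<n$ wallets have positive balance. Jensen on those $k$ wallets gives $\sum_i f(w_i)\le k f(a/k)$. I then need $k f(a/k)\le n f(a/n)$. This follows from the fact that $x\mapsto f(x)/x$ is nonincreasing for concave $f\ge 0$ on $(0,\infty)$: from concavity and positivity, $f(tx)\ge t f(x)+(1-t)\lim_{y\downarrow 0}f(y)\ge t f(x)$ for $t\in(0,1]$. Taking $x=a/k$ and $t=k/n$ gives $f(a/n)\ge (k/n)f(a/k)$, that is, $n f(a/n)\ge k f(a/k)$.

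Under the second convention, the extended $f$ is concave on $[0,\infty)$, because taking limits preserves the concavity inequality. Jensen then applies directly with no case split.

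I would present the first convention, since it matches the voting interpretation that a wallet without tokens casts no vote. I would also note that the minimum-balance constraint $w_i\ge m$ from the cost scheme plays no role here: the even split satisfies $a/n\ge m$ whenever any feasible allocation does, because the average of numbers each at least $m$ is itself at least $m$.
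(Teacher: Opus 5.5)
Your proof is correct and uses the same argument as the paper: Jensen's inequality with uniform weights $1/n$ gives $\sum_i f(w_i)\le n f(a/n)$, and the feasible even split attains this bound. Your extra handling of allocations with some $w_i=0$, where $f$ is not defined, is sound. One route is the $k$-wallet reduction using $f(tx)\ge t f(x)$; the other extends $f$ concavely to $0$. Either one closes a domain gap that the paper's proof passes over without comment.
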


\begin{proof}
Because $f$ is concave, Jensen's inequality yields
\[
f\!\left(\frac{1}{n}\sum_{i=1}^n w_i\right) \ge \frac{1}{n}\sum_{i=1}^n f(w_i).
\]
Substituting the average wallet balance $\frac{a}{n} = \frac{1}{n}\sum_{i=1}^n w_i$ on the left and multiplying both sides by $n$ gives
\[
n\,f\!\left(\frac{a}{n}\right) \ge \sum_{i=1}^n f(w_i).
\]
Thus, for every feasible allocation $(w_1, \dots, w_n)$, the total voting power is upper-bounded by $n\,f(a/n)$, and this upper bound is achieved by the equal allocation $w_i = a/n$ for all $i$.
\end{proof}

\begin{insightbox}
Given a fixed token budget and a concave voting scheme, the attacker's best strategy is to distribute tokens \emph{equally} across wallets. Any uneven allocation strictly lowers total voting power. This
simplifies the attacker's problem to deciding how many wallets to use. 
\end{insightbox}
\subsection{Optimal Token Amount}
Now that we know an even token split is optimal, we use \Cref{lem:even} to determine the optimal amount of tokens per wallet given a fixed number of wallets. For simplicity, we restrict to the case where the minimum wallet bound is not enforced.

\begin{restatable}[Unrestricted Optimal Token Amount per Wallet]{lemma}{lemOptWallet}\label{lem:optimal-wallet-token-amount}
Let $f:\mathbb{R}_{> 0}\to\mathbb{R}_{> 0}$ be positive, concave, and increasing. Fix a total token amount $a>0$ and an integer $n\ge 1$. Further, assume $\frac{a-p}{n} - v - s \geq m$. Among all allocations $(w_1,\dots,w_n)$ satisfying $w_i\ge m$ and $p + n(v + s) + \sum_{i=1}^n w_i \leq a$, then the optimal voting power will be:

\begin{center}
    $w^* = \frac{a-p}{n} - v - s$
\end{center}

\end{restatable}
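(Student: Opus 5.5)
The plan is to fix $n$ and treat the problem as a constrained maximization of $\sum_{i=1}^n f(w_i)$ over the feasible set
\[
F_n=\Bigl\{(w_1,\dots,w_n):\ w_i\ge m,\ \textstyle\sum_i w_i \le a-p-n(v+s)\Bigr\}.
\]
The argument combines two monotonicity facts: $f$ is increasing, so any slack in the budget should be spent, and $f$ is concave, so the spent budget should be spread evenly. The hypothesis $\frac{a-p}{n}-v-s\ge m$ guarantees $F_n\neq\emptyset$. It also guarantees that the candidate point $w_i=w^*$ for all $i$ lies in $F_n$, since its coordinates are at least $m$ and its total is exactly $a-p-n(v+s)$.

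\textbf{Step 1 (the budget binds).} Let $(w_1,\dots,w_n)\in F_n$ with $B:=\sum_i w_i < a-p-n(v+s)$. Add the slack $\delta:=a-p-n(v+s)-B>0$ to $w_1$. The new allocation stays in $F_n$, because raising a coordinate keeps it $\ge m$ and the sum now equals the budget exactly. Since $f$ is increasing, $f(w_1+\delta)\ge f(w_1)$, so the objective does not decrease. Hence it suffices to maximize over the face
\[
\Bigl\{w_i\ge m,\ \textstyle\sum_i w_i = a-p-n(v+s)\Bigr\}.
\]

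\textbf{Step 2 (even split).} On this face, apply \Cref{lem:even} with total token amount $A:=a-p-n(v+s)$ in place of $a$. That lemma only assumes $w_i\ge 0$, and our face is a subset of its feasible region. Jensen therefore gives $\sum_i f(w_i)\le n f(A/n)=n f(w^*)$. The even point $w_i=A/n=w^*$ satisfies $w^*\ge m$ by hypothesis, so it lies on the face and attains the bound. Together with Step 1, the maximum of $\sum_i f(w_i)$ over $F_n$ equals $n f(w^*)$ and is attained at $w_i=w^*$ for every $i$, which is the claim.

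\textbf{Main obstacle.} The hardest part is not the analysis but stating precisely what ``optimal'' means. The supremum over $F_n$ is a maximum because the even allocation attains it. However, the maximizer is unique only if $f$ is strictly concave and strictly increasing. For a concave $f$ that is merely nondecreasing (for example, one that eventually flattens), other allocations tie with the even split. So I would phrase the conclusion as ``$w^*$ is an optimal per-wallet balance and the optimal value is $n f(w^*)$,'' and remark that uniqueness holds under strictness.

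A smaller technical point is that $f$ is only defined on $\mathbb{R}_{>0}$. If $m=0$, a coordinate could equal $0$, where $f$ is undefined. I would handle this either by taking $w_i>0$ throughout or by extending $f$ to $[0,\infty)$ via its right limit at $0$, which preserves concavity and monotonicity. Either way, Jensen's inequality in \Cref{lem:even} applies unchanged.
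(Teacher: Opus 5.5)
Your proof is correct and uses the same two ingredients as the paper's proof. Monotonicity of $f$ forces the budget constraint to bind, \Cref{lem:even} (Jensen) forces the equal split, and the hypothesis $\frac{a-p}{n}-v-s\ge m$ makes the resulting point $w^*$ feasible. You apply them in the reverse order (binding budget first, then Jensen on the fixed total $a-p-n(v+s)$), which fits the equality-constrained form of \Cref{lem:even} more cleanly, and by claiming only that the objective ``does not decrease'' you avoid the paper's implicit reliance on strict monotonicity.
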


\begin{proof}
We set up the constrained optimization problem representing the maximum voting power available to a user:
\[
V^*(W) = \max_{w_i}\ \sum_{i=1}^{n} f(w_i),
\]
subject to (i) $p + n(v + s) + \sum_{i=1}^n w_i \le a$ and (ii) $w_i \ge m,\ \forall i$. By \Cref{lem:even}, an even split of tokens among the wallets yields optimal voting power. Moreover, since $f$ is increasing, the budget constraint must bind at the optimum; otherwise, we could increase some $w_i$ while remaining feasible and strictly increase $\sum_{i=1}^n f(w_i)$. Hence, under an even split,
\begin{align*}
    p + n(v + s) + n w^* & = a \\
    w^* & = \frac{a - p}{n} - v - s.
\end{align*}
Since $\frac{a-p}{n} - v - s \ge m$ by assumption, $w_i \ge m$ for all $i$ as well as the budget constraint are satisfied.
\end{proof}

\subsection{The Optimal Voting Power} \label{sec:optimal-voting}
First, we discuss the case where the minimum wallet bound is enforced and provide an intuitive understanding of how that leads to asymptotically linear voting power as $a \to \infty$. This case has the unconstrained optimal wallet size satisfying $w^* < m$. Because $f$ is concave and increasing, an attacker maximizes voting power by splitting tokens into as many wallets of size approximately $m$ as the budget allows. The problem therefore reduces to dividing the budget into portions of size $m+v+s$, each contributing approximately $f(m)$ voting power, yielding $V^*(W) \approx \frac{a-p}{m+v+s}\,f(m)$, which is linear in $a$.

Next, we consider the more significant case: when the minimum balance holds with slack (i.e. $\frac{a-p}{n} - v - s \geq m$). We combine the two preceding results into the main theorem of this section. For any fixed number of wallets $n$, \Cref{lem:even} tells us to distribute tokens evenly and \Cref{lem:optimal-wallet-token-amount} tells us each wallet must hold $w^{*} = \tfrac{a-p}{n} - v - s$; the only choice left to the attacker is the integer $n$ itself. 

A single-wallet honest participant, by contrast, receives only $f(a)$ --- a quantity that grows \emph{sublinearly} in $a$ whenever $f$ curves. What makes the theorem below striking is that, no matter which concave rule in our class $f$ that we began with, the attacker's optimal aggregate voting power $V^{*}(W)$ grows \emph{linearly} in the total budget $a$ asymptotically. Concavity earns the mechanism only a constant-factor reduction in the attacker's effective growth rate --- captured by a single number $\kappa$ depending only on $f$ and the per-wallet cost $v + s$ --- but provides no sublinear dampening whatsoever. The next few lines fix notation and derive the closed form before stating the theorem. \\

Fix an integer $n\ge 1$ satisfying $m \le \frac{a-p}{n}-v - s$. An optimal split into $n$ wallets is an even split, yielding the following values of $w_i$:
\[
w_i^*=\frac{a - p}{n}-v - s \qquad (i=1,\dots,n)
\]

 \noindent Aggregating across our $n$ wallets, this achieves a total voting power:
\[
V(W) = n f\!\left(\frac{a - p}{n}-v - s\right)
\]

\noindent A naive participant holds all tokens in one wallet. This yields a voting power of $f(a)$. The attacker, however, optimizes the number of wallets into which they split their tokens. This yields the optimal voting power:
\[
V^*(W)=\max_{n\in\mathbb N_{\ge 1}:\ m \le \frac{a-p}{n}-v - s}\ n f\!\left(\frac{a - p}{n}-v - s\right)
\]

\noindent We now show that this allows an attacker to obtain asymptotically linear voting power in their number of tokens, $a$, despite the concave mechanism.\\

\begin{restatable}[Concave Mechanisms Are Asymptotically Linear-Exploitable in Token Amount]{theorem}{thmAsympLin}\label{thm:asymp-linear}
Let $A=a-p$ and $c=v+s$. Assume that there exists at least one integer $n\ge 1$ such that
\[
m \le \frac{A}{n} - c
\]
Define the values:
\[
g(x) = \frac{f(x)}{x+c}
\qquad\text{and}\qquad
\kappa = \sup_{x\ge m} g(x) < \infty
\]
Then, we have that:
\begin{enumerate}
\item (Linear upper bound) The optimal voting power satisfies
\[
V^*(W) \le \kappa A
\]
\item (Asymptotic tightness) If $g$ attains its supremum on $[m,\infty)$ at some $x^*\ge m$, then
\[
\lim_{a\to\infty}\frac{V^*(W)}{A}=\kappa
\]
\end{enumerate}
\end{restatable}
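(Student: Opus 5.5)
The plan is to work directly with the reduced problem derived just before the statement,
\[
V^*(W)=\max_{n\ge 1:\ m\le A/n-c}\ n\,f\!\left(\frac{A}{n}-c\right),
\]
which is justified by \Cref{lem:even} and \Cref{lem:optimal-wallet-token-amount}. The key algebraic observation is that for any feasible $n$, writing $x_n = A/n - c\ge m$ gives $n = A/(x_n+c)$. Hence
\[
n\,f(x_n)=A\cdot\frac{f(x_n)}{x_n+c}=A\,g(x_n).
\]
Each candidate value is therefore $A$ times the value of $g$ at a point of $[m,\infty)$. The whole theorem reduces to comparing the discrete set $\{x_n\}$ with the continuum $[m,\infty)$.

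\textbf{Upper bound.} Since $x_n\ge m$, we have $g(x_n)\le\kappa$, and so $n f(x_n)\le\kappa A$ for every feasible $n$. Taking the maximum gives $V^*(W)\le\kappa A$. The finiteness of $\kappa$ also needs an argument, because the statement asserts it. Positivity and concavity of $f$ give $f(x)\le f(m)+f'_+(m)(x-m)$ for $x\ge m$ (a supporting line at $m$). Dividing by $x+c$ shows that $g$ is bounded on $[m,\infty)$ whenever $m+c>0$. The model's assumption that one of $m,v,s$ is positive is exactly what rules out $x+c\to 0$. If $m=0$, one uses instead that $f$ is defined only on $(0,\infty)$ and $c>0$.

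\textbf{Asymptotic tightness.} Fix the maximizer $x^*\ge m$ with $g(x^*)=\kappa$. For large $a$, choose $n=\lfloor A/(x^*+c)\rfloor$, which is at least $1$ once $A\ge x^*+c$. Then $A/n\ge x^*+c$, so $x_n\ge x^*\ge m$ and $n$ is feasible. We also have $A/n\le (x^*+c)\cdot\frac{n+1}{n}$. This gives $x^*\le x_n\le x^*+(x^*+c)/n$, so $x_n\to x^*$ as $a\to\infty$. Because $f$ is concave on an open interval, it is continuous at $x^*>0$, and therefore $g(x_n)\to g(x^*)=\kappa$. If $x^*=m=0$, one would need right-continuity and should handle it separately, or note that $x^*>0$ in the relevant case. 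Combining this with the upper bound yields
\[
\kappa\ge\frac{V^*(W)}{A}\ge g(x_n)\to\kappa .
\]
A cleaner alternative avoids continuity altogether: $f$ is increasing and $x_n\ge x^*$, so $f(x_n)\ge f(x^*)$. Then
\[
n f(x_n)\ge n f(x^*)\ge \left(\frac{A}{x^*+c}-1\right)f(x^*)=\kappa A-f(x^*),
\]
and dividing by $A$ gives the limit directly. I would use this monotonicity version.

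The main obstacle I expect is not the limit itself but the bookkeeping around the reduction. First, the maximum in $V^*(W)$ is restricted to $n$ with $A/n-c\ge m$; this is fine for the upper bound, but it should be checked that the definition's supremum over all splits agrees with it. \Cref{lem:even} and \Cref{lem:optimal-wallet-token-amount} handle this for fixed $n$. Second, the finiteness of $\kappa$ in degenerate cost regimes needs the edge-case check described above. The integer rounding of $n$ is handled by the monotonicity trick.
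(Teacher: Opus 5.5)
Your argument is correct, and its skeleton matches the paper's. You use the same substitution $x=A/n-c$, $n=A/(x+c)$ for the upper bound. For tightness you make the same choice $n_A=\lfloor A/(x^*+c)\rfloor$ with the same feasibility check $x_{n_A}\ge x^*\ge m$.

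You diverge at the final step. The paper shows $x_{n_A}\to x^*$ by trapping $x^*+c$ between $A/(n_A+1)$ and $A/n_A$ and bounding their gap by $4(x^*+c)^2/A$. It then invokes continuity of $g$ at $x^*$, inherited from concavity of $f$, in an $\varepsilon$--$\delta$ squeeze. Your preferred monotonicity route skips the convergence of $x_{n_A}$ entirely: $f(x_{n_A})\ge f(x^*)$ and $n_A\ge A/(x^*+c)-1$ give $\kappa-f(x^*)/A\le V^*(W)/A\le\kappa$ directly.

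Your version is shorter and yields an explicit $O(1/A)$ rate. It also avoids the boundary issue in the paper's appeal to right-continuity at $0$, where $f$ is not even defined. The paper's route, in exchange, uses only concavity of $f$ and not monotonicity. But monotonicity is already needed for the reduction through \Cref{lem:optimal-wallet-token-amount}, so nothing is lost in context. Your extra argument that $\kappa<\infty$ is something the paper does not prove; it builds it into the hypothesis.
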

{The proof of the theorem is deferred to \Cref{sec:missing-proofs}.}

\begin{insightbox}
No matter which concave rule $f$ is chosen, the attacker's optimal voting power grows \emph{linearly} in their token budget: $V^{*}(W) \le \kappa\,(a - p)$, with equality in the limit $a \to \infty$. The constant $\kappa = \sup_{x \ge m} f(x)/(x + v + s)$ depends only on the mechanism and the gas-denominated per-wallet cost. Geometrically it is the slope of the steepest line through the origin that lies above the attacker-return curve. In other words, concavity provides a finite dampening factor but \emph{no sublinear dampening}: an attacker with enough tokens faces the same linear-in-$a$ cost--benefit as under one-token-one-vote, recovering plutocracy up to a constant.
\end{insightbox}

\subsection{Closed forms for common concave rules}
\label{sec:closed-forms}

Having established our main result, we now specialize the optimization to the three concave rules most often proposed for DAO governance --- quadratic, power, and logarithmic --- and obtain closed forms for $V^*(W)$. In order to do this, we consider a relaxation $n$ to be positive real numbers. Assuming the minimum-wallet-balance constraint is nonbinding at the relaxed optimum, \Cref{thm:asymp-linear} implies that the integer optimum is asymptotically equal to the relaxed optimum as \(a \to \infty\).
\Cref{tab:mechanisms_Sybil} collects the resulting expressions, and \Cref{cor:closed-forms} states them formally.

\begin{table}[H]
\centering
\begin{tabular}{lll}
\hline
\textbf{Voting Function} & \textbf{Traditional Form} & \textbf{Worst-Case Sybil Form} \\
\hline
Linear Voting & $V(W)=a$ & $V^*(W) = a$ \\
Quadratic Voting & $V(W)=\sqrt{a}$ & $V^*(W)=\frac{a-p}{2\sqrt{v + s}}$ \\
Power Voting ($\beta \in (0, 1)$) & $V(W)=a^\beta$ & $V^*(W)=\frac{(a-p)(\beta^\beta)(1-\beta)^{(1-\beta)}}{(v + s)^{1-\beta}}$ \\
Logarithmic Voting & $V(W)=\ln(a+1)$ & $V^*(W)=\frac{(a-p)\, W_0\!\left(\frac{v+s-1}{e}\right)}{v+s-1}$ \\
\hline
\end{tabular}
\caption{Traditional vote power compared to Sybil-optimal voting power for the linear baseline and standard concave functions}
\label{tab:mechanisms_Sybil}
\end{table}

\begin{restatable}[Closed forms for common concave rules]{corollary}{corClosedForms}\label{cor:closed-forms}
Let $A = a - p$ and $c = v + s$ and assume the minimum wallet bound is a slack condition. Under the conditions of \Cref{thm:asymp-linear}, the attacker's optimal voting power admits the following closed forms.

\begin{enumerate}
    \item \textbf{Power voting.} If $f(x) = x^{\beta}$ with $\beta \in (0, 1)$, then
    \[
        V^*(W) \;=\; \frac{A\, \beta^{\beta} (1 - \beta)^{1 - \beta}}{c^{1 - \beta}}.
    \]
    As a special case, \emph{quadratic voting} ($\beta = 1/2$, $f(x) = \sqrt{x}$) yields $V^*(W) = A / (2 \sqrt{c})$. \\

    \item \textbf{Logarithmic voting.} If $f(x) = \ln(x+1)$, then
    \[
        V^*(W) \;=\; \frac{A\, W_0\!\left((c-1) / e\right)}{c-1},
    \]
    where $W_0$ denotes the principal branch of the Lambert $W$ function.\footnote{The \emph{Lambert $W$ function} is the inverse of $z \mapsto z\, e^{z}$: by definition, $W(z)\, e^{W(z)} = z$. The \emph{principal branch} $W_0$ is the (unique) real-valued branch on $[-1/e, \infty)$, satisfying $W_0(0) = 0$ and $W_0(z) \to \infty$ as $z \to \infty$. A useful identity, applied in the proof, is $e^{W_0(z)} = z / W_0(z)$ for $z \neq 0$, which follows directly from the defining equation.}
\end{enumerate}
\end{restatable}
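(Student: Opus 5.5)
The plan is to reduce each closed form to computing $\kappa = \sup_{x \ge m} g(x)$ with $g(x) = f(x)/(x+c)$, as in \Cref{thm:asymp-linear}. That theorem gives $V^*(W) \le \kappa A$, with equality in the limit whenever the supremum is attained. Under the continuous relaxation in $n$ and the slack minimum-balance assumption, the attacker chooses per-wallet balance $x$ freely, and the number of wallets is $n = A/(x+c)$. So
\[
V^*(W) = \sup_x \frac{A}{x+c}\, f(x) = A \sup_x g(x).
\]
This shows that the relaxed optimum equals $\kappa A$ exactly. It then suffices, for each rule, to maximize $g$ over $x > 0$, check that the maximizer $x^*$ is interior, and check that $x^* \ge m$ (which is the slackness hypothesis).

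For power voting, we set $h(x) = \ln g(x) = \beta \ln x - \ln(x+c)$. Setting $h'(x) = \beta/x - 1/(x+c) = 0$ gives $x^* = \beta c/(1-\beta)$. Uniqueness and maximality follow because $h' > 0$ below $x^*$ and $h' < 0$ above it. Substituting, $x^* + c = c/(1-\beta)$, so
\[
g(x^*) = \left(\frac{\beta c}{1-\beta}\right)^{\beta} \frac{1-\beta}{c} = \beta^\beta (1-\beta)^{1-\beta} c^{\beta-1}.
\]
The case $\beta = 1/2$ then gives $1/(2\sqrt{c})$ immediately.

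For logarithmic voting, the first-order condition $g'(x) = 0$ reads $(x+c)/(x+1) = \ln(x+1)$. Substituting $u = \ln(x+1)$, so that $x+1 = e^u$, turns this into $e^u + (c-1) = u e^u$, that is, $(u-1)e^{u-1} = (c-1)/e$. Hence $u - 1 = W_0\bigl((c-1)/e\bigr)$. The principal branch is the correct one because $u - 1 > -1$ is needed for $x > 0$ (when $c > 0$). At the critical point, $g(x^*) = u/(x^*+c) = 1/(x^*+1) = e^{-u}$. Using the identity $e^{W_0(z)} = z/W_0(z)$, we get
\[
e^{-u} = e^{-1} \frac{W_0(z)}{z} = \frac{W_0\bigl((c-1)/e\bigr)}{c-1},
\]
which is the stated formula.

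The main obstacle is the logarithmic case. Its maximality is not immediate, and it has degenerate parameter values. I would argue that the numerator of $g'$, namely $(x+c)/(x+1) - \ln(x+1)$, is strictly decreasing: its derivative is $(1-c)/(x+1)^2 - 1/(x+1)$, which is negative for $c > 0$. Together with the limits at $0$ and $\infty$, this gives a unique interior maximizer. Separately, I would treat $c = 1$ as a removable singularity, using $W_0(z)/z \to 1$ so that the formula gives $1/e$. Finally, I would note that the relaxation error from integrality of $n$ vanishes as $a \to \infty$, by part 2 of \Cref{thm:asymp-linear}.
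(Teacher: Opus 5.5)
Your proposal is correct and follows essentially the same route as the paper. You reduce the relaxed problem to $A\cdot\sup_x g(x)$ via $n=A/(x+c)$, solve $f'(x)(x+c)=f(x)$ for each rule, and write the logarithmic critical point through $W_0$; your $u=\ln(x+1)$ is the paper's $y+1$, and your shortcut $g(x^*)=f'(x^*)=e^{-u}$ replaces its explicit computation of $w^*+c$. Your sign analysis of the numerator of $g'$, the branch justification via $u-1>-1$, and the $c=1$ limit are more careful than the paper, which claims uniqueness ``by definition of the principal branch'' even though for $0<c<1$ the branch $W_{-1}$ also solves $y e^{y}=(c-1)/e$ and must be excluded exactly as you do.
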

{The derivations of the closed forms are deferred to \Cref{sec:missing-proofs}.}

\section{Estimating the Impact on Modern DAOs}
\label{sec:impact}

In this section, we instantiate our model with on-chain data
from five major DAOs --- ENS, Compound, ZKsync,
Arbitrum, and Uniswap --- and replay their ten most recent
finalized proposals under linear, quadratic, power, and logarithmic voting. The
goal is to quantify how much cheaper a Sybil attacker's vote is than an honest
voter's on real protocols, and to test whether economic frictions (gas, minimum
balances) meaningfully deter the attack.

\pparagraph{Getting data}
We pull address-level voting data from Tally~\cite{tally-api}\footnote{The
authors are aware that Tally announced it is shutting down
(\url{https://newsletter.tally.xyz/p/tally-is-shutting-down}). We do not
believe the announcement affects the quality of the raw vote data our
evaluation relies on.} for the ten most recent finalized proposals for five
well-known DAOs: \emph{ENS} (naming service), \emph{Compound} (lending),
\emph{ZKsync} (zero-knowledge rollup), \emph{Arbitrum} (optimistic rollup), and
\emph{Uniswap} (decentralized exchange)
%\footnote{Code and data link omitted for blind review.}.
The DAOs were chosen for their size and active governance as each had at least one proposal finalized within the three months preceding our data collection on April 20, 2026.
We exclude active, canceled, and draft proposals so that every entry in our dataset has a stable tally.
Our query is governor agnostic which matters for ZKsync and Arbitrum who have multiple governors that control different aspects of the protocol.

We also pull USD spot prices for ETH and the governance tokens (ENS, COMP, ZK, ARB, UNI) from CoinGecko~\cite{coingecko-api} on the days proposals were created.

\pparagraph{Gas accounting}
We compute each of the splitting cost $s$ and voting cost $v$ as the product of the number of gas units the corresponding action consumes and the per-unit gas price paid on the proposal's chain.
ENS, Compound, and Uniswap run governance on Ethereum, where we use 65,000 gas for splitting (ERC-20 \texttt{transfer} to a new address) and 175,000 gas for voting (approximate combined amount for \texttt{delegate} and \texttt{castVote} transactions on Uniswap).
ZKsync and Arbitrum are rollups on top of Ethereum and run their governance on the rollup, where a transaction pays for its own execution plus its share of the cost of posting that execution back to Ethereum.
Since both rollups batch transactions and fold that share into each transaction's reported gas, we set their budgets conservatively at 500,000 gas for both splitting and voting.

Per-unit gas prices can vary depending on chain conditions, so we calculate them for each proposal rather than holding them constant.
For Ethereum and Arbitrum we query Etherscan~\cite{etherscan-api} for \texttt{baseFeePerGas} and average it across blocks sampled from the day the proposal was created.
For ZKsync we fetch the base fee directly from ZKsync's public API~\cite{zksync-api} using the block the proposal was created on.
Ethereum's base fee has lots of variation, Arbitrum's is relatively stable, and ZKsync's is constant because it is set by a protocol-defined rule rather than chain conditions.
We set $p = 0$ throughout, treating any one-time attack setup cost as negligible.

\pparagraph{Analysis}
We are now able to replay all 50 proposals to analyze the benefit an attacker would have from splitting their votes.
We run the analysis with linear (baseline), quadratic, power ($\beta = 0.25$), and logarithmic voting functions.

For each proposal we begin by computing the total voting power of all participants as $\sum_i f(w_i)$ where $f$ is the voting function.
We aggregate every recorded voter, regardless of whether they voted for, against, or abstain, because we do not know which side of the proposal the attacker would take and the worst case for the attacker is to face every honest voter aligned against them.

The attacker's budget to match the voting power of all honest voters is the value of $a$ that solves $V^*(W) = \sum_i f(w_i)$.
We obtain $a$ for each proposal by plugging its $s$ and $v$ (recall $p = 0$) into the closed form equation from \Cref{sec:closed-forms}.
We report the result in USD so that all protocols are directly comparable.

\begin{figure}[t]
    \centering
    \includegraphics[width=\linewidth]{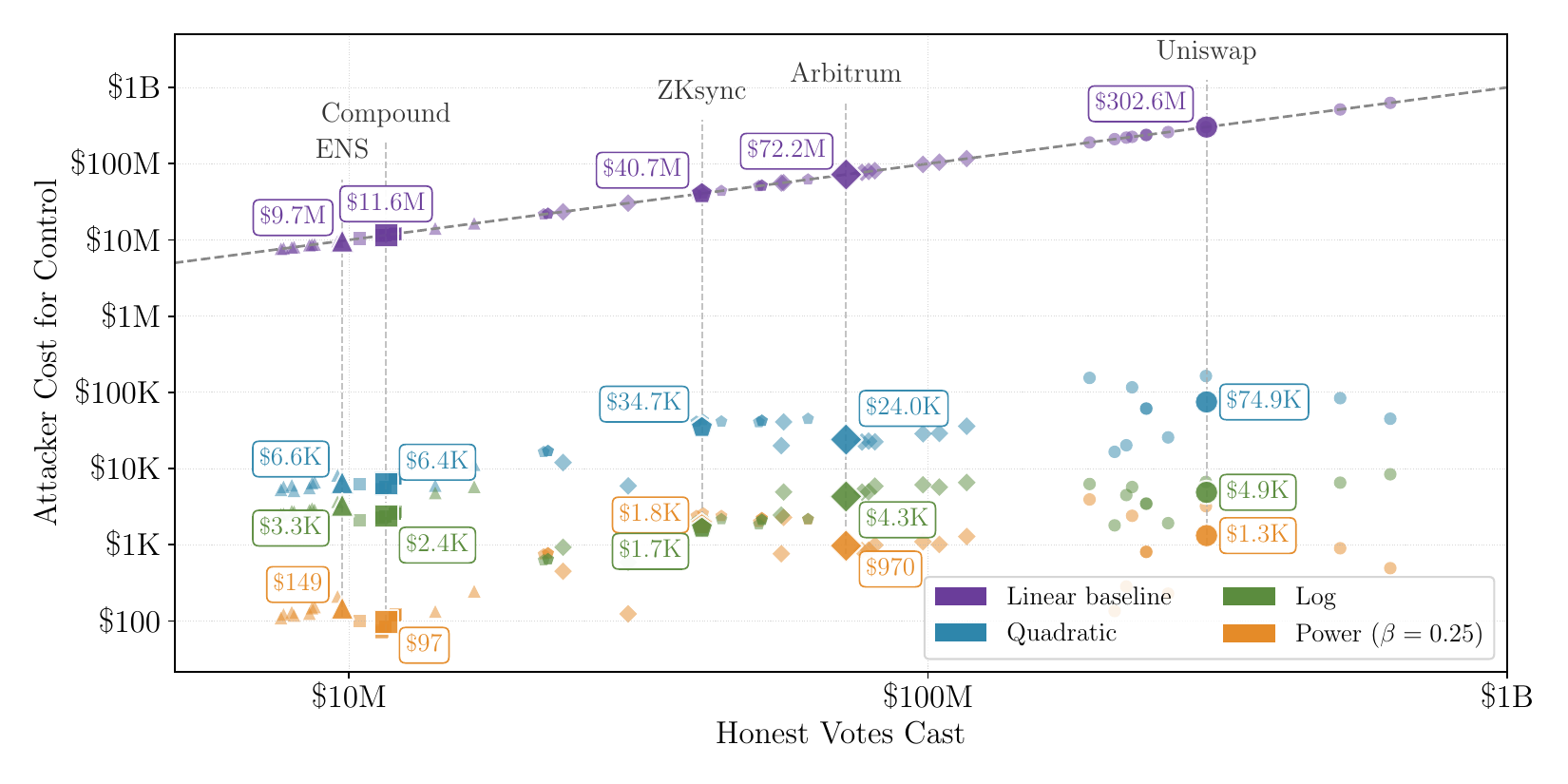}
    \caption{An attacker's cost to achieve voting control across five DAOs under four voting functions. Large dots are per-protocol means over the ten most recent finalized proposals and are annotated with their value while background dots are individual proposals. The dashed $y = x$ line is the linear-equivalent baseline, so distance below it is the amplification factor gained by an attacker performing our attack. Axes are in USD and are on a log scale while color encodes the voting function and shape encodes the protocol.}
    \label{fig:sybil-scatter-all}
\end{figure}

\pparagraph{Attack cost by DAO}
\Cref{fig:sybil-scatter-all} plots the attacker's cost to achieve voting control for every (DAO, voting function) pair.
The averages per protocol are bolded and labeled with their value which can also be seen in \Cref{tab:attacker-costs}.
Linear voting is immune to the attack, so every dot is on the $y = x$ diagonal and the attacker must have USD equal to the amount voted by all honest participants.
However, every concave function collapses the cost for the attacker by three to five orders of magnitude with exact amplification factors listed in \Cref{tab:attacker-costs}.

\begin{table}[h]
\centering
\caption{Mean attacker cost to achieve voting control across the ten proposals
sampled from each DAO, for each voting function. Linear voting is Sybil-immune
(splitting does not help), so its column is the capital cost of acquiring the
tokens outright and serves as the worst case any Sybil-resistant rule should
demand. The parenthesized multiplier is the \emph{Sybil amplification factor}
--- i.e., how many times cheaper the attack becomes under the concave rule than
under the Sybil-immune linear baseline. Amplification above $1\times$ means the
concave ``improvement'' is \emph{worse} than linear against an optimal Sybil
attacker.}
\begin{tabular}{lrrrr}
\hline
\textbf{Protocol} & \multicolumn{1}{c}{\textbf{Linear}} & \multicolumn{1}{c}{\textbf{Quadratic}} & \multicolumn{1}{c}{\textbf{Log}} & \multicolumn{1}{c}{\textbf{Power ($\beta = 0.25$)}} \\
\hline
ENS      & \$9.7M    & \$6.6K (1,472$\times$)  & \$3.3K (2,912$\times$)  & \$149 (65,290$\times$) \\
Compound & \$11.6M   & \$6.4K (1,820$\times$)  & \$2.4K (4,835$\times$)  & \$97 (119,328$\times$) \\
ZKsync   & \$40.7M   & \$34.7K (1,172$\times$) & \$1.7K (24,539$\times$) & \$1.8K (22,247$\times$) \\
Arbitrum & \$72.2M   & \$24.0K (3,006$\times$) & \$4.3K (16,763$\times$) & \$970 (74,392$\times$) \\
Uniswap  & \$302.6M  & \$74.9K (4,039$\times$) & \$4.9K (62,041$\times$) & \$1.3K (229,175$\times$) \\
\hline
\end{tabular}
\label{tab:attacker-costs}
\end{table}

\pparagraph{Isolating the role of splitting}
The amplification factors in \Cref{tab:attacker-costs} conflate two effects:
the concavity of the voting rule and the attacker's freedom to split wallets.
To isolate the second, we hold the attacker's USD budget fixed at the
Sybil-optimal control cost and ask how much voting power the \emph{same} budget
delivers under quadratic voting with and without splitting.
\Cref{fig:sybil-voting-power} plots this comparison. Blue dots sit on the
control-threshold line $V_{\mathrm{attacker}} = V_h$: at the Sybil-optimal
budget, a splitting attacker matches the honest total by construction. Red
dots, at the same $x$ and the same dollar spend, plot
$\sqrt{a_{\mathrm{tokens}}}/V_h$ --- the fraction of $V_h$ a single-wallet
attacker achieves with that same budget. The upward arrow per protocol,
annotated with the amplification factor, is the share of voting power the
attacker gains purely by splitting, with no additional capital. For Uniswap,
the same \$74.9K delivers full control when split across many wallets and only
roughly $\tfrac{1}{259}$ of $V_h$ when held in a single wallet; every other DAO
in our sample shows a similarly large gap.

\begin{figure}[!b]
    \centering
    \includegraphics[width=\linewidth]{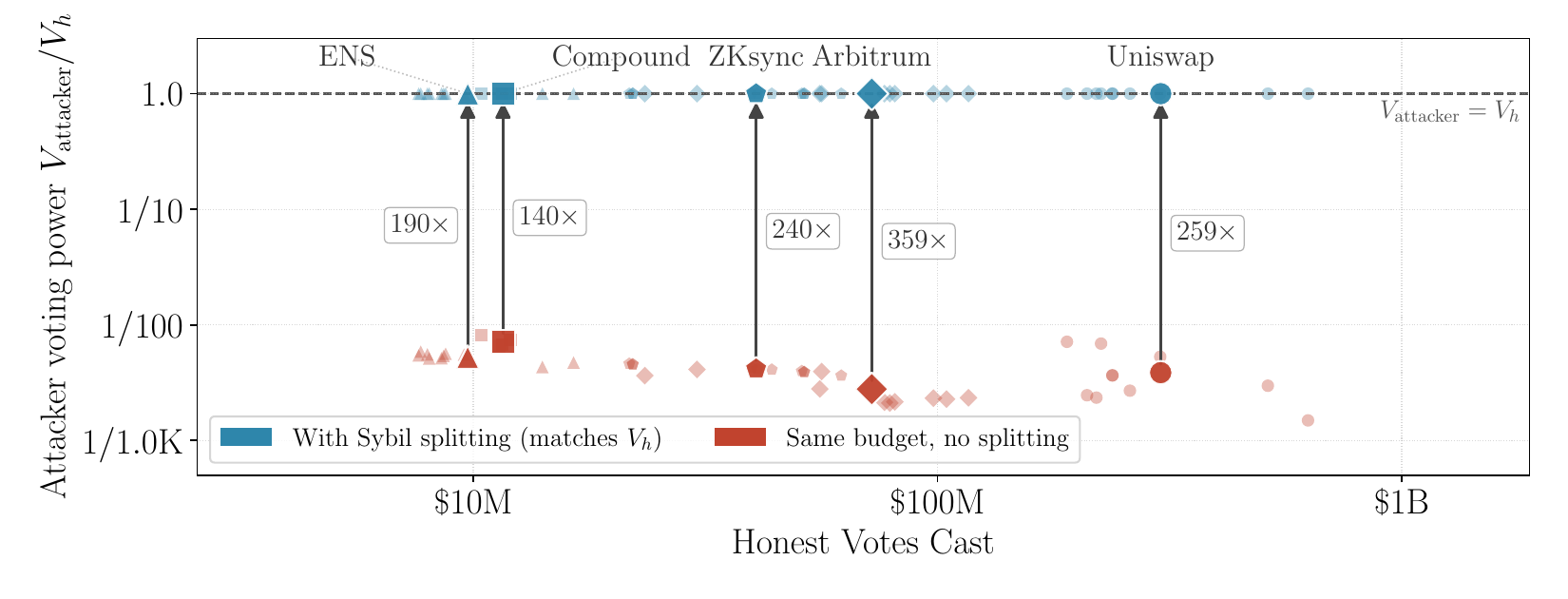}
    \caption{Attacker voting power as a fraction of the honest total $V_h$ under quadratic voting, at the Sybil-optimal control budget from \Cref{tab:attacker-costs}. Blue dots: Sybil-optimal splitting (sits on $y = 1$ by construction). Red dots: the same budget held in a single wallet. Upward arrows, labeled with the amplification factor, visualize the voting power gained purely from splitting.}
    \label{fig:sybil-voting-power}
\end{figure}

\pparagraph{Voting power per dollar}
Beyond total cost, another way to see the attacker's advantage is to compare
voting power per dollar: how many votes each dollar of budget delivers for an
honest voter versus for the attacker. This is calculated as $V(a)/a$ for an
honest voter and $V^*(a)/a$ for the attacker.
\Cref{fig:voting-power-per-dollar} plots these results for quadratic voting on Uniswap while \Cref{sec:voting-power-per-dollar} plots the results for logarithmic and power voting.

The linear baseline maintains one-token-one-vote, so the voting power per dollar is constant at 1.
The honest curve is strictly decreasing because concave voting functions discount marginal tokens --- each additional dollar contributes less than the previous one.
The attacker's curves asymptotically plateau at $\kappa$ as proven in \Cref{thm:asymp-linear}.
% This happens very quickly with all three concave functions visually flattening out by \$10, demonstrating that an attacker with a small budget can already achieve the linear advantage.

Unlike in the previous section, $V^*(a)$ here is the integer-optimal value, not the relaxed value, and consequently the curves have a wavy shape.
To see why, follow the attacker's decisions as the budget $a$ grows.
Right after the attacker opens a new wallet, the extra budget is spread evenly across every wallet.
Because the gas cost $c$ of each wallet is already paid while voting tokens keep accumulating in every wallet, each extra dollar buys more voting power than the dollar before it, and the votes-per-dollar ratio climbs.
Eventually the concavity of the voting rule takes over and each new token buys less voting power than the token before it.
The ratio peaks (at $\kappa$) and starts to slide back down.
It keeps sliding until opening another wallet --- paying a fresh $c$ in gas but then distributing tokens across more wallets --- becomes the better move.
At that moment the attacker splits, the budget is redistributed evenly across the now-larger pool of wallets, and the cycle begins again: rise, peak, fall, split.

Every new wallet brings with it an additional $c$ of gas cost that has to be taken from budget that the attacker previously voted with.
All of the existing wallets evenly share this additional cost.
So the per-wallet share shrinks with every new wallet, and the dips grow shallower with each cycle, drawing the curve ever closer to the asymptote $\kappa$ from below.

\begin{figure}[htbp]
    \centering
    % \includegraphics[width=\linewidth]{assets/attack_vs_honest_uni_vote_per_dollar.pdf}
    % \includegraphics[width=\linewidth]{assets/attack_uni_vote_per_dollar_costs.pdf}
    % \caption{\textbf{Top:} Voting power per dollar for Uniswap in our model under linear, quadratic, logarithmic, and power voting. Dashed curves show $V(a)/a$, an honest participant's voting power per dollar. Solid curves show $V^*(a)/a$, the attacker's Sybil-optimal voting power per dollar. While honest voters see their power per dollar decline with every increasing dollar, the attacker's power per dollar generally increases before plateauing (at $\kappa$). The linear baseline illustrates how the attacker gains no advantage by splitting their votes as $V(a)/a = V^*(a)/a = 1$. Note that $m=0$ for Uniswap. \textbf{Bottom:} Quadratic voting power per dollar for Uniswap with different values of $m$ and $c$. The dashed black and the blue lines are the same as the top plot's respective quadratic lines. The orange sets $m=2$, the green doubles $c$, and the red combines them together. Despite the increases in $m$ and $c$, the attacker's power per dollar still plateaus, just at a lower level.}
    \includegraphics[width=\linewidth]{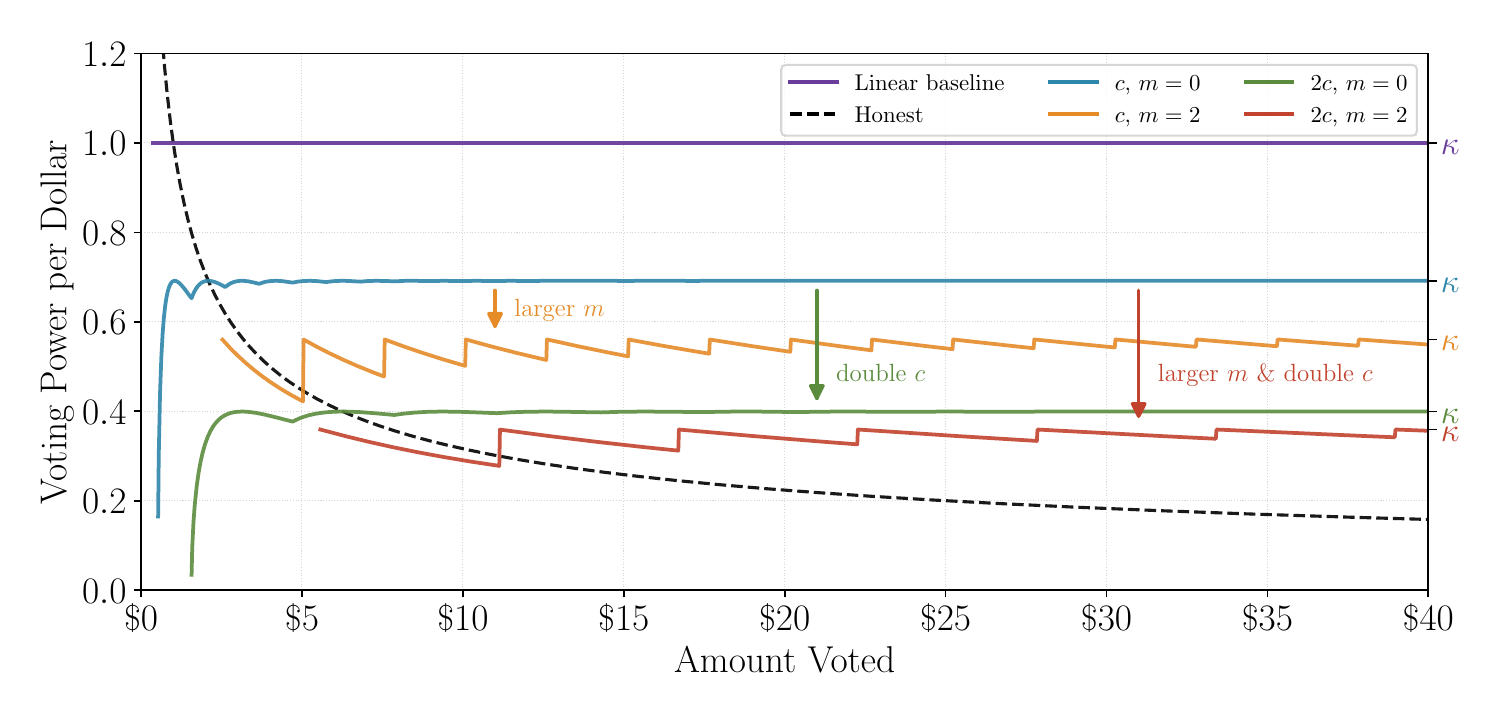}
    \caption{
        Voting power per dollar for Uniswap in our model under quadratic voting. The dashed curve shows $V(a)/a$, an honest participant's voting power per dollar. The solid curves show $V^*(a)/a$, the attacker's Sybil-optimal voting power per dollar. While honest voters see their power per dollar decline with every increasing dollar, the attacker's power per dollar eventually plateaus at $\kappa$ (shown on the right axis). The orange, green, and red curves illustrate the effect of increasing $m$ and doubling $c$ to try and defend against the attack. While the curves are below the blue, which has $m = 0$ and regular $c$, the attacker's power per dollar still plateaus, just at a lower level.
    }
    \label{fig:voting-power-per-dollar}
\end{figure}

\pparagraph{Voting Power with Frictions}
\Cref{fig:voting-power-per-dollar} also shows what happens to the attacker's voting-power-per-dollar curve on Uniswap under quadratic voting when $c$ or $m$ is changed.
\Cref{sec:voting-power-per-dollar} show the same for logarithmic and power voting.
Raising $c$ means each wallet costs more in gas, so the attacker's budget is consumed faster by new wallets.
Raising $m$ forces each wallet to hold more tokens than the attacker would naturally choose, concentrating power in fewer, over-funded wallets.
Each effect lowers the plateau $\kappa$: the orange line (with $m = 2$) sits below the baseline blue, the green line (with $c$ doubled) sits lower still, and the red line (both $m=2$ and $c$ doubled) sits lowest of all.
Crucially, every curve still flattens into a plateau, just at a different height.
This illustrates how the asymptotic linearity guaranteed by \Cref{thm:asymp-linear} still holds in spite of additional frictions.
Raising $c$ or $m$ can lower the slope of that linear growth and push out when the attack first becomes profitable, but it cannot turn the concave rule's response into anything sublinear in the attacker's budget.

\pparagraph{Caveats}
It is alarming that an attacker can control proposals from some of the largest DAOs with 1,100-4,000$\times$ less cost than all honest voters combined when Quadratic Voting is in place and 2,900-62,000$\times$ less cost when logarithmic voting is in place.
That said, there are some caveats to consider.

The analysis assumes that honest voters behave the exact same way under all voting functions.
In reality, voters may behave differently depending on the voting function used, making the attack less effective.

Further, attackers may be more interested in creating and passing a malicious proposal (i.e. stealing a DAO's treasury) than swaying the vote on proposals from honest participants.
In this scenario, honest voters would fight to thwart the attack, and the participation rate and amounts of votes cast would likely be higher making the attack more expensive.

\section{Mitigation Discussion}
\label{sec:discussion}
Our result relies on a single assumption: an attacker can create wallets at a cost-per-wallet that does not depend on the attacker's identity. Every mitigation in this section either attacks that assumption (by binding wallets to identities) or accepts it and composes concave voting with a second mechanism layer. No defense we are aware of preserves the assumption \emph{and} breaks asymptotic linearity against a patient, well-capitalized adversary. We group candidate defenses into three families and assess each.

\pparagraph{Breaking Sybil Capacity: Proof-of-Personhood Layers}
One simple defense is to enforce one entity one vote. 
\emph{Proof-of-Personhood} (PoP) systems --- Worldcoin's iris
biometrics, BrightID's social-graph analysis, and Gitcoin Passport's
aggregated-stamp scoring, among others ---
each attempt to issue at most one voting credential per human without a central
registry. 
Gated by such a credential, a concave rule $f$ really does move
voting power away from whales because the number of credentials a single adversary can obtain is bounded. The catch is that every PoP design today makes a trade along one of three axes: (i) \emph{privacy}, by requiring biometric or
off-chain enrollment; (ii) \emph{inclusivity}, by excluding participants who
cannot or will not enroll; or (iii) \emph{adversarial robustness}, since
stamp-based and social-graph designs can be farmed at a cost much lower than
their economic weight. Of these, stamp farming is the most relevant to DAO
governance: a sufficiently wealthy attacker can purchase or synthesize stamps
at a price that is still small compared to the governance value at stake,
effectively reintroducing a costed-per-identity version of our attack. PoP is
thus best framed as a \emph{necessary but not sufficient} substrate --- the
only known defense that formally breaks our theorem, provided the specific PoP
instantiation's stamp price exceeds the governance-capture payoff. This same bound applies to a network of colluders who temporarily redistribute tokens for a vote to equalize their wallet holdings.

\pparagraph{Raising the Cost of Sybils: Economic Friction}
A second way of defense is to make Sybil splitting expensive enough to deter attacks in practice. 
The natural solutions are per-wallet bonds, long snapshot look-backs, and conviction- or
age-weighted voting. 
Each raises one of the cost terms in our model (i.e., $p$,
$s$, or the opportunity cost of capital) but none changes the \emph{functional form} of the attacker's problem. Our theorem still applies
and $V^{*}(W)$ remains asymptotically linear in the attacker's budget with a
(possibly reduced) slope $\kappa$. Snapshot voting with a short look-back
denies opportunistic attacks mounted after a proposal is seen, but a patient
attacker pre-splits in anticipation and recovers full power. Longer look-backs
and conviction voting (in which voting weight grows with token age in a single
wallet, as in 1Hive~\cite{hive}) extend the attack's preparation window from
hours to months. This is a real deterrent against opportunistic adversaries but
a patient attacker --- precisely the adversary a permissionless protocol must
assume --- is willing to pre-fund Sybil wallets at token-generation time.
Refundable per-wallet bonds increase $s$, but because the bond is returned
after honest behavior they do not raise the attacker's \emph{capital} cost,
only their opportunity cost. Moreover, bonds presuppose an adjudication
mechanism that can slash them, which itself typically requires the identity
layer the scheme was designed to avoid. In conclusion, economic
friction buys larger attack windows and smaller $\kappa$, not Sybil resistance.

\pparagraph{Compositional Defenses: Stacking Mechanisms}
The most pragmatic defenses accept that concave voting alone is insufficient
and combine it with a second mechanism whose failure modes are disjoint.
\emph{Bicameral governance} is one instance: a proposal must pass
through two chambers with different rules, typically one concave and one
linear. Then an attacker who neutralizes the concave chamber via Sybil
splitting still faces the linear chamber's full whale-visible cost. Alternatively, a
concave vote can be gated by an \emph{identity quorum}. Here a proposal
passes only if $V^{*}(W)$ clears a weighted threshold and at least $k$ distinct
PoP-verified participants vote in favor. This decouples vote weighting
from gatekeeping and ensures a single Sybil cluster cannot meet the quorum no
matter how many wallets it fabricates. 
% Complementary post-hoc layers include
% \emph{Sybil-cluster forensics} (graph analysis on co-funding patterns, as
% deployed by Gitcoin for quadratic-funding rounds) to flag and retroactively
% nullify fraudulent votes, and \emph{timelock plus emergency veto} to bound the
% blast radius of a passed malicious proposal --- the mechanism by which Indexed
% Finance narrowly averted the November 2023 attack on its
% DAO~\cite{feichtinger2024sok}.
% \todo{@Duc: Need to verify}
% None of these composite designs breaks our
% theorem on the concave layer itself; they work by ensuring that a successful
% Sybil attack on that layer does not, on its own, decide the vote.

\begin{insightbox}[Mitigation summary]
No standalone defense preserves the permissionless blockchain assumption of our
model and simultaneously restores sublinear dampening against a patient
adversary. Proof-of-Personhood is the only family that breaks our
theorem, and it does so by replacing the assumption rather than the voting
rule. Economic friction raises attack cost linearly but does not change the
asymptotic scaling. Compositional designs accept the attack on the concave
layer and contain it at an outer layer. We therefore recommend that concave
voting be deployed only as one component of a composite governance mechanism
whose outermost Sybil-resistance layer is identity-based --- never as the
primary line of defense.
\end{insightbox}

% !TEX root = ../main.tex
\section{Related Work}

Two prior works analyze the Sybil vulnerability of QV specifically.
Dimitri~\cite{dimitri} raises the conceptual concern that permissionless QV
deployments cannot enforce the single-identity assumption underlying Lalley
and Weyl's optimality guarantee. Bennett~\cite{bennett} formalized this concern for the square-root rule
$f(w) = \sqrt{w}$, proving that the Sybil-optimal attacker achieves
asymptotically linear voting power, both when the wallet minimum is enforced and when the condition holds with slack. We generalize this result to the entire
class of finite, increasing, positive, concave wallet-level rules and
instantiate it empirically across five major DAOs. In doing so, we show that parameter modifications to the mechanism (e.g. increased voting costs) are ineffective.

Bennett~\cite{bennett} notes that this Sybil strategy is also available to
rational voters, which they may choose to execute as long as the utility they
obtain from additional voting power outweighs the cost of the split. Still,
Kaplow and Kominers~\cite{whowillvote} suggest it's unlikely an honest voter
would take advantage of it. They claim that the actual number of votes cast
under Quadratic Voting would likely significantly deviate from pure, rational
QV equilibrium play. This is because voting behavior would more likely be
driven by social and psychological factors. Interestingly, they discuss a
similar exploit of QV that mirrors our proposed Sybil attack involving affinity
groups, wherein a group encourages voters to contribute small amounts as they
have disproportionate impact.

Zhou \etal~\cite{10.1145/3719027.3744810} concurrently propose \emph{QV-net},
a decentralized self-tallying QV scheme providing maximal ballot secrecy via
new zero-knowledge argument protocols. Their focus is voter privacy~\cite{dtl}, 
preventing the plaintext ballot leakage inherent in current on-chain QV
deployments.
%--- and they report several DAOs (\eg MetFi DAO, Karmaverse,
%Pistachio DAO, MoonDAO) that already use QV for on-chain governance. 
Their work is orthogonal and complementary to ours. Since QV-net preserves the
wallet-level structure of QV, our attack (\Cref{thm:asymp-linear}) still applies, and a privacy-enhanced QV remains asymptotically linear under a
rational Sybil adversary.

The broader landscape of DAO governance attacks was recently systematized by
Feichtinger \etal~\cite{feichtinger2024sok}, cataloguing 28 real-world
incidents across bribing, token control, human-computer interaction, and
code/protocol vulnerability categories. The attack we formalize sits within
the token-control family, but exploits a mechanism-design flaw rather than
acquiring additional tokens outright: an attacker who cannot afford to buy
control under one-token-one-vote can still achieve it under any concave rule
by splitting a smaller budget across many wallets.

\section{Conclusion}
There's nothing inherently bad about using a concave mechanisms, nor a linear one; but system designers need to be aware of how introducing inequalities can have adverse effects. When we design governance, there’s an instinct to be inclusive. Blockchains were created for decentralization, so it is reasonable to want to empower small holders through concave voting rules. The issue is that mechanisms aimed at leveling the playing field can inadvertently destabilize it when implemented without guardrails. The people most harmed by governance failure are not the whales with hedges and lawyers. They are the small holders, the newcomers, the communities that cannot afford to be wrong even once. For our most vulnerable participants to reap the benefits, it is imperative that the most powerful cannot exploit them. Otherwise, we perpetuate the structural injustices that the mechanism was supposed to solve.

Concave mechanisms can provide a decentralizing force. But as our analysis demonstrates, they are too dangerous to implement alone when wallet splitting is unlimited. The path forward is to combine concave mechanisms with complementary approaches to ensure their integrity. Proof of Personhood and other anti-Sybil mechanisms can help mitigate the effects on concave mechanisms, while bicameral voting systems can provide an additional backstop. Perhaps the optimal DAO governance system is not a single voting function, but a composition of multiple mechanisms, each resistant to different attack vectors.

Future work should examine compositional mechanism defenses. Additionally, privacy-preserving Proof of Personhood could make anti-Sybil mechanisms more feasible for on-chain governance, reducing such attacks in the first place. If we study these further, we may converge upon a governance system that is more inclusive and fundamentally fair for all of its participants. Decentralization is not merely a slogan. It is a property we have to secure.
\bibliography{references}

\appendix
\section{Missing Proofs}
\label{sec:missing-proofs}

For completeness, we collect here the proofs of the statements deferred from the main body. Proofs of \Cref{lem:even} and \Cref{lem:optimal-wallet-token-amount} appear inline with their statements in \Cref{sec:proof}.

\lemSybilPower*
\begin{proof}
Given total wealth $a$, divide it into $\left\lfloor a/m \right\rfloor$ wallets of size $m$, and one remaining wallet smaller than $m$ (which we ignore since it cannot vote). The voting power is then lower-bounded as
\[
V^*(a) \ge \left\lfloor \frac{a}{m} \right\rfloor f(m). \qedhere
\]
\end{proof}

\thmPlutocratic*
\begin{proof}
For $a \ge 2m$, we have
\[
\left\lfloor \frac{a}{m} \right\rfloor \ge \frac{a}{2m}.
\]
Combining with the Sybil Voting Power Lemma (\Cref{lem:sybilpower}),
\[
V^*(a) \ge \left\lfloor \frac{a}{m} \right\rfloor f(m) \ge \frac{a\cdot f(m)}{2m}.
\]
Setting $\alpha = \frac{f(m)}{2m}$ proves the result.
\end{proof}

\thmRobust*
\begin{proof}
Assuming an attacker elects to split their budget, the fixed cost of wallet creation reduces the attacker's usable wealth from $a$ to
\[
a' = a - p.
\]
Adding in the wallet splitting cost and voting cost are equivalent to increasing the minimum wallet size:
\[
m' = m + s + v.
\]

Thus any split of the post-entry wealth $a'$ into gross wallets is an instance of \Cref{theorem:plutocratic} applied to the shifted rule $g(x)=f(x-(s+v))$ with minimum wallet size $m'$. Therefore,
% Thus any split of the post-entry wealth $a'$ into wallets is an instance of \Cref{theorem:plutocratic} with minimum wallet size $m'$. Therefore,
\[
%V_C^*(a) = V_{m'}^*(a') = \Omega(a').
V_C^*(a) = \Omega(a')
\]
Since $p$ is a constant independent of $a$, we have $a - p = \Omega(a)$, and hence $V_C^*(a) = \Omega(a)$.
\end{proof}

\thmAsympLin*
\begin{proof}
We prove the statement by showing each component is true.

\medskip\noindent\textbf{Step 1: Linear upper bound.} For any feasible integer $n$, let $x = \frac{A}{n} - c$ so that $x \ge m$ and $n = \frac{A}{x+c}$. By substituting $g(x)$,
\[
n\, f\!\left(\frac{A}{n} - c\right) = \frac{A}{x+c}\, f(x) = A \cdot \frac{f(x)}{x+c} = A\, g(x) \le A\,\kappa.
\]
Taking the maximum over feasible $n$ yields the upper bound:
\[
V^*(W) \le A\,\kappa.
\]

\medskip\noindent\textbf{Step 2: Asymptotic tightness.} Assume $g(x)$ attains its supremum at some $x^* \ge m$. Thus $\kappa = g(x^*)$.

For all sufficiently large $A$ such that $A \ge x^* + c$, define the following variables to conform to our integer-bounded number of wallets:
\[
n_A = \left\lfloor \frac{A}{x^* + c}\right\rfloor, \qquad x_A = \frac{A}{n_A} - c.
\]
This satisfies our lower wallet-bound condition as
\[
\frac{A}{n_A} \ge \frac{A}{A/(x^*+c)} = x^* + c,
\]
and therefore
\[
x_A = \frac{A}{n_A} - c \ge (x^* + c) - c = x^* \ge m.
\]

Next, define $t_A = \frac{A}{x^* + c}$. By definition of the floor function, $n_A \le t_A < n_A + 1$. Substituting in $t_A$, we can define upper and lower bounds, $U_A$ and $L_A$ respectively, for $x^* + c$:
\[
L_A = \frac{A}{n_A + 1} < \frac{A}{t_A} = x^* + c \le \frac{A}{n_A} = U_A.
\]
We now show that the difference between these bounds vanishes as $A \to \infty$:
\[
U_A - L_A = \frac{A}{n_A} - \frac{A}{n_A + 1} = A \cdot \frac{1}{n_A^2 + n_A}.
\]
Since $n_A = \left\lfloor \frac{A}{x^*+c}\right\rfloor \ge \frac{A}{x^*+c} - 1$, for large enough $A$ (specifically $\frac{A}{x^*+c} \ge 2$) we have $n_A \ge \frac{A}{2(x^*+c)}$. Thus,
\[
\frac{A}{n_A^2 + n_A} \le \frac{A}{n_A^2} \le \frac{A}{\bigl(A/(2(x^*+c))\bigr)^2} = \frac{4(x^*+c)^2}{A} \longrightarrow 0 \quad\text{as } A \to \infty.
\]
Since $L_A < x^* + c \le U_A$ and $U_A - L_A \to 0$, $U_A = A/n_A \to x^* + c$, and hence $x_A = A/n_A - c \to x^*$.

Because $f$ is concave it is continuous on $(0, \infty)$ (and right-continuous at $0$). Since $x + c \ge c > 0$ for all $x \ge m$, $g(x) = f(x)/(x+c)$ is continuous at $x^*$. Choose any $\varepsilon > 0$. By continuity, there exists $\delta > 0$ such that
\[
|x - x^*| < \delta \ \Rightarrow\ |g(x) - g(x^*)| < \varepsilon.
\]
Since $x_A \to x^*$, there exists $A_0$ such that for all $A \ge A_0$, $|x_A - x^*| < \delta$, and hence $|g(x_A) - g(x^*)| < \varepsilon$. This gives $\kappa - \varepsilon = g(x^*) - \varepsilon < g(x_A)$ for large $A$, and so
\[
\kappa - \varepsilon < g(x_A) = \frac{f(x_A)}{x_A + c} = \frac{n_A\, f(x_A)}{A} = \frac{n_A\, f\bigl(\tfrac{A}{n_A} - c\bigr)}{A} \le \frac{V^*(W)}{A}.
\]
Combining with $V^*(W)/A \le \kappa$ from Step 1,
\[
\kappa - \varepsilon < \frac{V^*(W)}{A} \le \kappa.
\]
Letting $\varepsilon \to 0$, the squeeze theorem gives $\lim_{a \to \infty} V^*(W)/A = \kappa$.
\end{proof}

\corClosedForms*
\begin{proof}
In each case, we reduce the relaxed objective $h(n) = n\, f\!\left(\tfrac{A}{n} - c\right)$ to a single-variable problem in the per-wallet amount. Let $g(w) = f(w)/(w+c)$ denote the attacker's voting power per dollar at per-wallet amount $w$. Substituting $w = A/n - c$ (so that $n = A/(w+c)$ is a bijection from $n \in (0, A/c)$ onto $w \in (0, \infty)$) gives
\[
h(n) = n\, f(w) = A \cdot \frac{f(w)}{w + c} = A \cdot g(w).
\]
The relaxed optimum is therefore $V^*(W) = A \cdot \sup_{w > 0} g(w)$. When the supremum is attained at an interior point $w^*$, that point satisfies the first-order condition $g'(w^*) = 0$. By the quotient rule,
\[
g'(w) = \frac{f'(w)\,(w + c) - f(w)}{(w + c)^2},
\]
and since $(w+c)^2 > 0$ on the interior, $g'(w^*) = 0$ is equivalent to
\[
f'(w^*)\,(w^* + c) = f(w^*). \tag{$\star$}
\]
Each case below solves $(\star)$ for $w^*$ and then evaluates $V^*(W) = A \cdot g(w^*)$.

\pparagraph{Case 1: Power voting, $f(x) = x^{\beta}$ with $\beta \in (0, 1)$}
Here $f'(w) = \beta\, w^{\beta - 1}$, so $(\star)$ becomes $\beta\, w^{\beta - 1}(w + c) = w^\beta$. Dividing through by $w^{\beta - 1}$ yields $\beta(w + c) = w$, i.e., $(1 - \beta)\,w = \beta\, c$. This gives the unique interior critical point
\[
w^* = \frac{\beta\, c}{1 - \beta}, \qquad w^* + c = \frac{c}{1 - \beta}.
\]
On the boundary, $g(w) = w^\beta/(w + c) \to 0$ as $w \to 0^+$ and as $w \to \infty$ (since $\beta < 1$), while $g(w^*) > 0$, so $w^*$ is the global maximum. Substituting,
\[
g(w^*) = \frac{(w^*)^\beta}{w^* + c} = \frac{(\beta\, c/(1-\beta))^\beta}{c/(1-\beta)} = \frac{\beta^\beta\,(1-\beta)^{1-\beta}}{c^{1-\beta}},
\]
and multiplying by $A$ yields the stated formula. The quadratic specialization is $\beta = 1/2$, which collapses $\beta^\beta(1-\beta)^{1-\beta}$ to $1/2$ and yields $V^*(W) = A/(2\sqrt{c})$.

\pparagraph{Case 2: Logarithmic voting, $f(x) = \ln(x+1)$}
Here $f'(w) = 1/(w+1)$, so $(\star)$ becomes $(w + c)/(w + 1) = \ln(w + 1)$. Substituting $x = w + 1$ (so $w + c = x + (c - 1)$) reduces the first-order condition to
\[
x\,(\ln x - 1) = c - 1.
\]
Let $y = \ln x - 1$, so that $x = e^{y+1} = e \cdot e^y$. The first-order condition becomes
\[
e \cdot y \cdot e^{y} = c - 1
\qquad\Longleftrightarrow\qquad
y\, e^{y} = \frac{c-1}{e}.
\]
By definition of the principal Lambert branch, this equation has a unique solution, which we denote
\[
y^* = W_0\!\left(\frac{c-1}{e}\right).
\]
Reversing the substitutions $y = \ln x - 1$ and $x = w + 1$ recovers the optimal per-wallet balance. From $x = e \cdot e^y$ and the identity $e^{W_0(z)} = z/W_0(z)$ (with $z = (c-1)/e$),
\[
x^* = e \cdot e^{y^*} = \frac{c - 1}{y^*},
\]
so
\[
w^* = x^* - 1 = \frac{c - 1}{y^*} - 1, \qquad w^* + c = \frac{(c-1)(1 + y^*)}{y^*}.
\]
Since $\ln(x^*) = y^* + 1$ (directly from $y^* = \ln x^* - 1$), the voting power per dollar simplifies by cancellation of the $(1 + y^*)$ factor:
\[
g(w^*) = \frac{f(w^*)}{w^* + c} = \frac{\ln(x^*)}{w^* + c} = \frac{1 + y^*}{(c-1)(1 + y^*)/y^*} = \frac{y^*}{c - 1} = \frac{W_0((c-1)/e)}{c - 1}.
\]
On the boundary, $g(w) = \ln(w+1)/(w+c) \to 0$ as $w \to 0^+$ (since $c > 0$) and as $w \to \infty$, while $g(w^*) > 0$, so $w^*$ is the global maximum. Multiplying by $A$ yields
\[
V^*(W) = \frac{A\, W_0\!\left((c-1)/e\right)}{c - 1}. \qedhere
\]
\end{proof}

\section{Honest vs Attacker Voting Power per Dollar}
\label{sec:voting-power-per-dollar}
\begin{figure}[h]
    \centering
    \includegraphics[width=\linewidth]{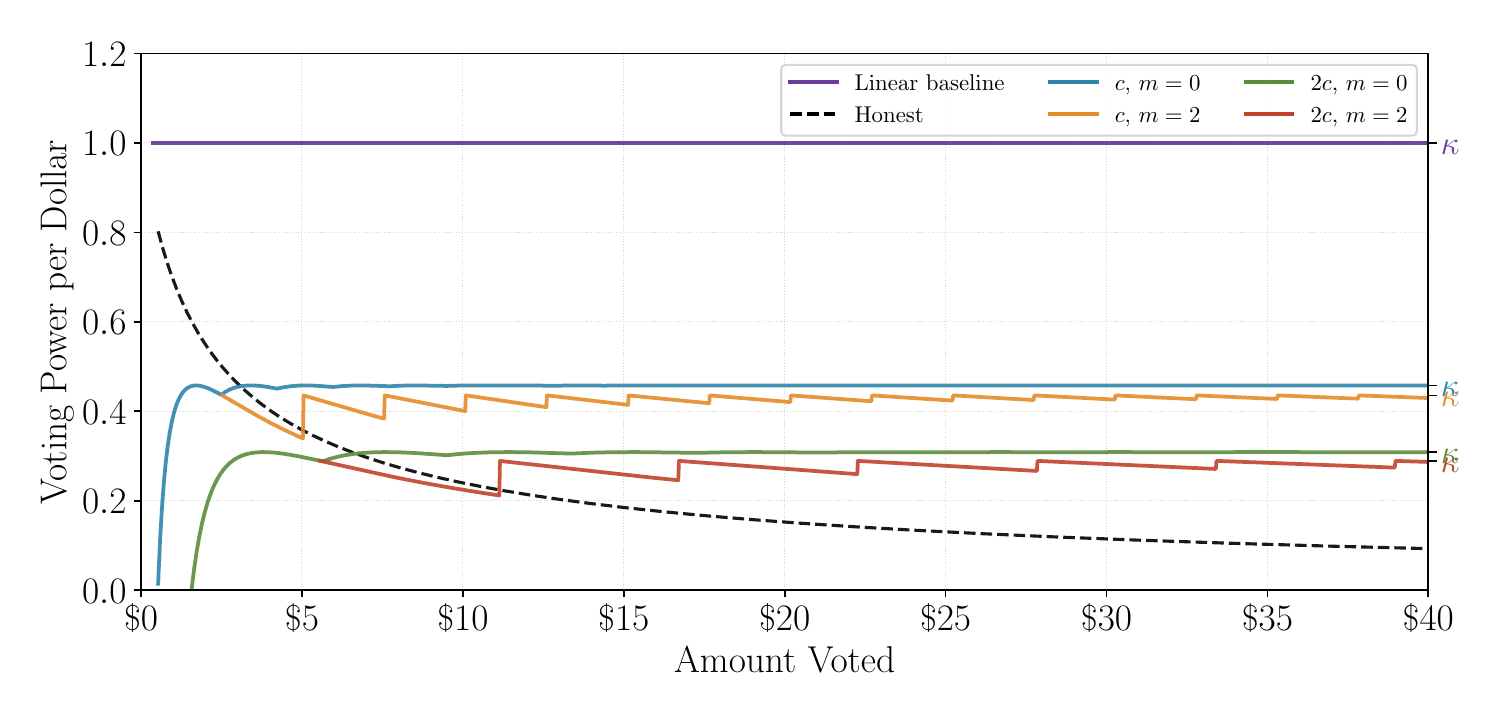}
    \includegraphics[width=\linewidth]{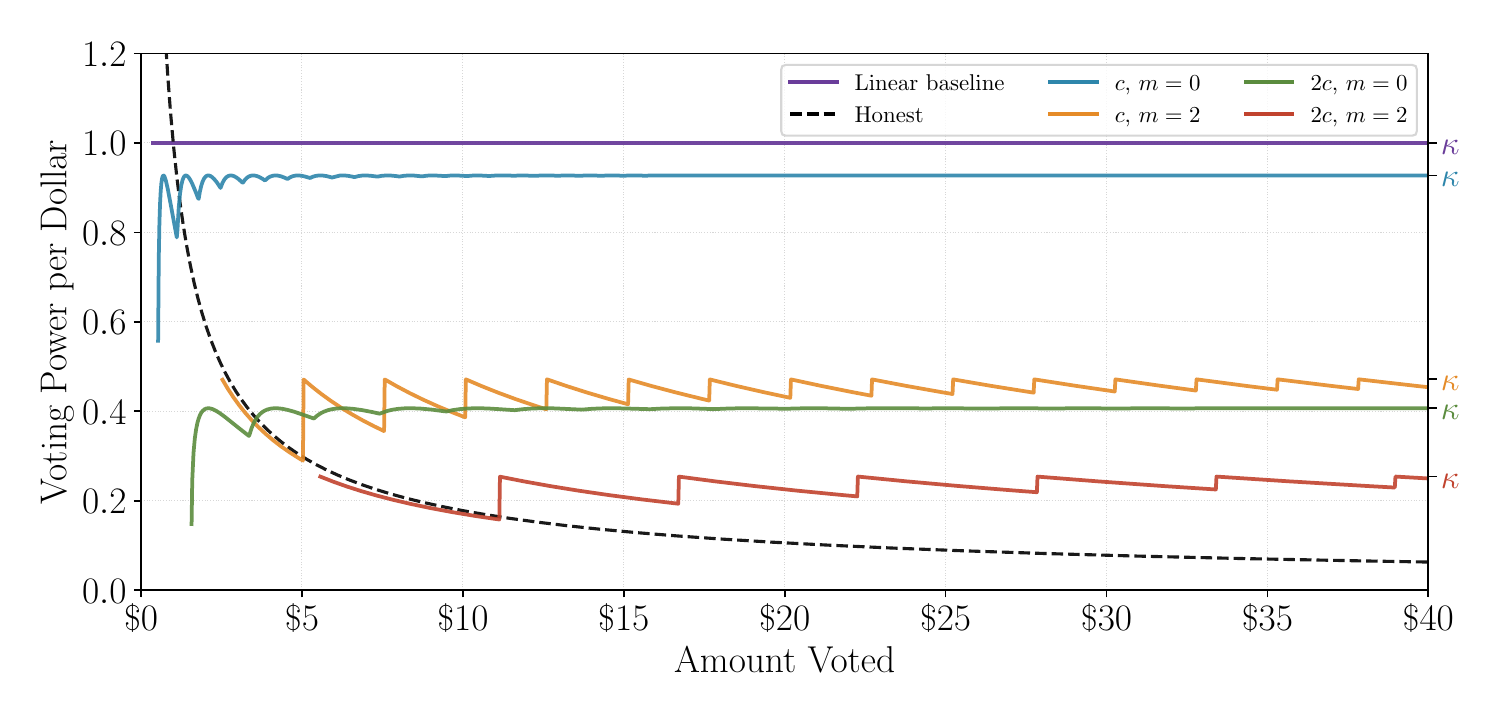}
    \caption{\textbf{Top:} Voting power per dollar for Uniswap proposals in our model under logarithmic voting. The dashed black line shows how an honest voter has less power with each additional dollar they vote with. This is true for all concave voting functions. The blue line represents an attacker under realistic conditions. The costs $c = v + s$ are taken from the analyzed proposals and current Ethereum assumptions, and there is no minimum wallet balance to vote ($m = 0$). As seen, the blue curve flattens out and approaches its $\kappa$ value. This is because the attacker achieves linear voting with enough budget. The orange, green, and red lines indicate how either doubling $c$ or increasing $m$ to 2 does not impede the attacker from achieving linear voting power, instead it just reduces the asymptotic $\kappa$ values.
    \textbf{Bottom:} The same as the top for power ($\beta = 0.25$) voting. Identical patterns are observed.}
    \label{fig:voting-power-per-dollar-log-power}
\end{figure}

\end{document}